\theoremstyle{plain}
\newtheorem{theorem}{Theorem}
\newtheorem{lemma}[theorem]{Lemma}
\newtheorem{claim}[theorem]{Claim}
\newtheorem{fact}[theorem]{Fact}
\newtheorem{corollary}[theorem]{Corollary}
\theoremstyle{definition}
\newtheorem{definition}[theorem]{Definition}
\DeclareMathOperator{\RE}{RE}
\DeclareMathOperator{\KL}{KL}
\newcommand*{\polylog}{\mathrm{polylog}}
\newtcbox{\mymath}[1][]{%
    nobeforeafter, math upper, tcbox raise base,
    enhanced, colframe=black,
    colback=white, boxrule=1.5pt,
    #1}
\newcommand{\bigOt}[1]{\widetilde{{O}}\left( #1 \right)}
\title{QuantumBoost: A lazy, yet fast, quantum algorithm for learning with weak hypotheses}
\author{Amira Abbas\thanks{Google Quantum AI, Venice, California, 90291, USA. {\tt amiraabbas@google.com}} \quad\quad
Yanlin Chen\thanks{University of Maryland, College Park, MD 20742, USA. {\tt yanlin@umd.edu}}
\quad\quad
Tuyen Nguyen\thanks{University of Technology Sydney, Ultimo, NSW, Australia. {\tt Tuyen.Q.Nguyen@student.uts.edu.au}}
\quad\quad
Ronald de Wolf\thanks{QuSoft, CWI and University of Amsterdam, the Netherlands.  Partially supported by the Dutch Research Council (NWO) through Gravitation-grant Quantum Software Consortium, 024.003.037. {\tt rdewolf@cwi.nl}}
}
\date{}
\begin{document}
\maketitle
\vspace{-1cm}
\begin{abstract}
The technique of combining multiple votes to enhance the quality of a decision is the core of boosting algorithms in machine learning. In particular, boosting provably increases decision quality by combining multiple ``weak learners''—hypotheses that are only slightly better than random guessing—into a single ``strong learner'' that classifies data well. There exist various versions of boosting algorithms, which we improve upon through the introduction of QuantumBoost. Inspired by classical work by Barak, Hardt and Kale, our QuantumBoost algorithm achieves the best known runtime over other boosting methods through two innovations. First, it uses a quantum algorithm to compute approximate Bregman projections faster. Second, it combines this with a lazy projection strategy, a technique from convex optimization where projections are performed infrequently rather than every iteration. To our knowledge, QuantumBoost is the first algorithm, classical or quantum, to successfully adopt a lazy projection strategy in the context of boosting. 
\end{abstract}
\maketitle

\section{Introduction}

With its simplicity and provable efficiency, boosting is one of the few algorithmic frameworks in machine learning that is both well-understood theoretically and widely used in practice. The idea was first posed by Kearns and Valiant~\cite{kearns1994cryptographic} in the context of probably approximately correct (PAC) learning~\cite{valiant1984theory}. They  conjectured the ability to ``boost'' a weak learning algorithm, which performs slightly better than random guessing, into a strong learning algorithm with an arbitrarily small generalization error. Schapire~\cite{schapire1990strength} was the first to introduce such a provably polynomial-time boosting algorithm, followed by Freund~\cite{freund1995boosting} who further improved the efficiency -- although practical bottlenecks still persisted. These bottlenecks were alleviated through the introduction of AdaBoost, a remarkable algorithm created by Freund and Schapire~\cite{freund1997decision} which, till this day, remains competitive for various machine learning tasks~\cite{arora2012multiplicative, viola2001rapid,drucker1999support,freund1996experiments}. 

For illustration, consider the canonical task of binary classification. We start from a training set $S = \{(x_i, y_i)\}_{i=1}^m$ where each $x_i \in \mathcal{X}$ is distributed according to some (possibly unknown) distribution~$D$, and labeled with a $y_i \in \mathcal{Y}$. These labels could, for instance, be chosen according to some unknown target function $f: \mathcal{X} \to \mathcal{Y}$ that we want to learn. For simplicity, let $\mathcal{X} = \{0,1\}^n$ and $\mathcal{Y}=\{-1,1\}$.\footnote{The assumption that $\mathcal{X}$ is the Boolean cube and that the labels (function values) are binary, is not necessary for boosting to work. We could of course also use $\mathcal{Y}=\{0,1\}$ as a range for $f$.} A weak learner is typically fed examples from the set $S$ according to some distribution that we ourselves choose, say $D'$, and is promised to output a hypothesis $h: \{0,1\}^n \to \{0, 1\}$ that performs slightly better than random guessing: 
\begin{equation}
\mathrm{Pr}_{x_i \sim D'} [h(x_i) = y_i] \geq \frac{1}{2} + \gamma,
\end{equation}
where $\gamma \in (0,1/2)$ denotes the strength of the weak learner, meaning its advantage over random guessing. The key idea of AdaBoost is to start with a $D'$ that is uniform over the training set, and to call the weak learner over a series of iterations $t=1,\ldots, T$, each time updating $D'$ to force the learner to focus on examples that are frequently misclassified by the hypotheses produced in earlier iterations.  Using AdaBoost's update strategy and particular method of combining the weak learner's $T$ hypotheses into one final hypothesis $H$, Freund and Schapire~\cite{freund1997decision} showed that $T=O(\log(1/\epsilon)/\gamma^2)$ iterations suffice to achieve low \emph{empirical error}:
\begin{equation}
\mathrm{Pr}_{i\sim [m]} [H(x_i) \neq y_i] \leq \epsilon,
\end{equation}
where the notation ``$i\sim [m]$'' means that $i$ ranges uniformly over $1,\ldots,m$.
If the labels $y_i$ correspond to a target function~$f$, then VC-theory implies that, for large enough sample size~$m$, this low empirical error actually implies low \emph{generalization error}  w.r.t.\ the data-generating distribution~$D$: 
\begin{equation}
\mathrm{Pr}_{x\sim D} [H(x) \neq f(x)] \leq \epsilon',
\end{equation}
for an $\epsilon'$ that is only slightly bigger than~$\epsilon$. Note that generalization error measures error over the whole domain of $f$ (weighted by $D$), not just the $x_i$'s that happened to be part of the training set. In other words, we have learned a good approximation of~$f$, that generalizes well beyond the training set.

In an attempt to accommodate more realistic learning scenarios, Servedio~\cite{servedio2003smooth} introduced a boosting algorithm called SmoothBoost which allows for learning in the presence of a small amount of malicious noise, but uses $T = O(1/(\epsilon\gamma^2))$ iterations, which has a much worse $\epsilon$-dependence than AdaBoost. The main idea of SmoothBoost is to ensure that the updated distributions remain ``smooth'' at every iteration, meaning the weight assigned to each example is not much larger than uniform probability~$1/m$. 
As long as $T\ll m$, this smoothness property prevents a few (possibly malicious) errors in the labels of the $m$ examples from having undue influence over the final hypothesis. Interestingly, a variant of SmoothBoost was developed by Kale~\cite{kale2007boosting} to construct ``hard-core sets'', a form of hardness amplification of Boolean functions. The connection between hard-core set construction and boosting in learning theory is rather elegant: boosting methods hone in on examples that are difficult for a learner to classify, implying the existence of a so-called hard-core set (a set of inputs that are hard to classify). Kale's SmoothBoost algorithm gives a size matching the best known parameters of other hardcore-set constructions~\cite{klivans1999boosting,impagliazzo1995hard}. Additionally, SmoothBoost matches the favorable number of iterations in AdaBoost, with $T=O(\log(1/\epsilon)/\gamma^2)$, yielding a practical algorithm capable of learning in the presence of malicious noise. 

Thus far, all classical boosting techniques call the weak learner at each iteration and update an explicit weight vector over the $m$ examples in the training set. Denoting the runtime associated with calling the weak learner as $W$ (which we also use as an upper bound for the number of examples that a run of the weak learner needs), this inevitably incurs a runtime at least linear in $W$ and $m$ at each iteration of the algorithm. In an attempt to improve this scaling, AdaBoost was first quantized in~\cite{arunachalam2020quantum} and subsequently, SmoothBoost was quantized in~\cite{izdebski2020improved}. These quantum boosting algorithms both quadratically improve the scaling in $m$, at the expense of a scaling in~$\gamma$ that is worse than their classical counterparts.

\subsection{Our results}
We introduce a new quantum algorithm for boosting that we call QuantumBoost, which removes the explicit dependence on $m$ and matches AdaBoost's scaling in $\gamma$---something other existing quantum proposals for boosting have not achieved. Moreover, QuantumBoost is the first boosting method (classical or quantum) to improve the runtime dependence on $1/\epsilon$ to $\widetilde{O}(1/\sqrt{\epsilon})$. Our contributions can be stated as follows. 

\begin{theorem}[Informal: Empirical error and runtime of QuantumBoost]\label{thm:informal_time}
    Given access to a $\gamma$-weak learner for the concept class $\mathcal{C}$ with hypothesis class $\mathcal{H}$ and runtime $W$, and a training set $S=\{(x_i,y_i)\}_{i=1}^m$, QuantumBoost produces a hypothesis with empirical error (on the training set $S$) that is at most $\epsilon$, with an overall runtime of
    \begin{equation}
        \widetilde{O}\bigg(\frac{W}{\sqrt{\epsilon}\gamma^4} \bigg).
    \end{equation}
\end{theorem}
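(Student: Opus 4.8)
The plan is to follow the Barak--Hardt--Kale framework for boosting via approximate Bregman projections onto the set of smooth distributions, replacing the classical projection subroutine with a quantum one and inserting a lazy-projection strategy so that projections are performed only occasionally. I would set up the iteration as follows. At each step $t$ we maintain a weight vector (an ``unprojected'' measure) over the $m$ training examples; we feed the current normalized distribution $D'_t$ to the $\gamma$-weak learner to obtain a hypothesis $h_t$, then perform a multiplicative-weights-style update that increases weight on examples misclassified by $h_t$. To keep the distributions smooth (no example carrying weight much more than $1/m$, which is what controls the error and is needed for the analysis to close), we must periodically project back onto the smooth polytope in the relative-entropy (Bregman/KL) geometry. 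The first ingredient is a potential/regret argument showing that $T = \widetilde{O}(1/\gamma^2)$ such iterations suffice to drive the empirical error below $\epsilon$ — this is essentially the classical Barak--Hardt--Kale guarantee, and I would import it, being careful that it tolerates the \emph{approximate} projections our quantum subroutine returns (so I need an additive error budget on each projection that is small enough, $\poly(\epsilon,\gamma,1/m)$, not to spoil the potential decrease).

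The second ingredient, and where the quantum speedup lives, is computing each approximate Bregman projection. Here I would invoke the earlier-stated quantum subroutine: rather than touching all $m$ coordinates, one represents the projected distribution implicitly and uses amplitude estimation / quantum mean estimation (Grover-type speedups) to estimate the normalization and the relevant statistics of the projected measure to the required additive accuracy. This should cost roughly $\widetilde{O}(\sqrt{m})$-type factors in the naive accounting — but crucially the $1/\sqrt{\epsilon}$ in the final bound comes from a more careful amplitude-estimation argument on the quantity being estimated (the empirical error, or the mass of the ``hard'' examples), whose scale is $\Theta(\epsilon)$, so estimating it to relative-constant accuracy costs $\widetilde{O}(1/\sqrt{\epsilon})$ rather than $\widetilde{O}(1/\epsilon)$. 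I would also need to argue that the weak learner itself can be run on the implicitly-represented distribution $D'_t$ via quantum sampling, so that each call costs $\widetilde{O}(W)$ and not $\widetilde{O}(Wm)$.

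The third ingredient is the laziness: instead of projecting every iteration, we let the weight vector drift for a block of iterations and project once per block, showing (via the stability of the Bregman divergence along the drift) that the accumulated violation of smoothness over a block is still controllable, so that the overall error and iteration guarantees survive. Combining: $T=\widetilde{O}(1/\gamma^2)$ outer iterations, each weak-learner call costing $\widetilde{O}(W)$; the projections, amortized and quantum-accelerated, contributing the remaining $\widetilde{O}(1/(\sqrt{\epsilon}\,\gamma^2))$ factor; multiplying gives $\widetilde{O}(W/(\sqrt{\epsilon}\,\gamma^4))$.

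The main obstacle, I expect, is the error-propagation bookkeeping: the quantum projection is only approximate and the laziness further loosens smoothness, so I must track how a per-step additive error $\delta$ in the projection, compounded over $T$ lazy blocks, feeds into the potential-function decrease and hence into the final empirical-error bound, and then choose $\delta$ (and the block length) just small enough that the total slack is $O(\epsilon)$ while keeping $1/\delta$ inside the $\widetilde{O}(\cdot)$ — getting the polynomial dependence on $\gamma$ and $\epsilon$ exactly right here, rather than losing an extra factor, is the delicate part. A secondary subtlety is ensuring the quantum routines (weak-learner sampling, projection estimation) compose without the usual $\log$-factor blowups hiding a hidden $\poly(m)$, i.e.\ genuinely removing the explicit $m$-dependence.
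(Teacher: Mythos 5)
Your high-level plan is the same as the paper's: quantize Barak--Hardt--Kale, represent the weight vector implicitly, project only once per block, and run a potential/regret argument in relative entropy with a per-projection error budget. The piece you flag as the main obstacle (error propagation under lazy, approximate projections) is indeed where the paper does its work: it converts the $\zeta\epsilon m$-additive KL guarantee of each approximate projection into a $\zeta$-additive relative-entropy guarantee (Lemma~\ref{lem:RE_approx}), so $R=T/K$ projections contribute $R\zeta$ to the potential, and choosing $K=1/\gamma$, $\zeta=\gamma/4$ makes this $T\gamma^2/4$, which is absorbed by the $-T\gamma^2/2$ decrease from the multiplicative updates. So the approach is sound and matches the paper.

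There is, however, a genuine flaw in your runtime accounting: you arrive at $\tilde{O}(W/(\sqrt{\epsilon}\gamma^4))$ by multiplying ``$T$ iterations $\times$ $\tilde{O}(W)$ per weak-learner call $\times$ a $\tilde{O}(1/(\sqrt{\epsilon}\gamma^2))$ factor from the amortized projections,'' but the projection cost is not a multiplicative overhead on the weak-learner calls --- it is a separate additive term, and in the paper it totals $\tilde{O}(R\cdot T/(\sqrt{\epsilon}\zeta))=\tilde{O}(1/(\sqrt{\epsilon}\gamma^4))$ \emph{without} any factor of $W$. The dominant term $\tilde{O}(W/(\sqrt{\epsilon}\gamma^4))$ actually comes from preparing the $W$ quantum examples per iteration: because the measure is stored implicitly (via the stored hypotheses $h_1,\ldots,h_t$ and projection constants), evaluating a single entry $M^t(x)$ costs $O(t)=O(T)=O(1/\gamma^2)$, and the state-preparation routine needs $\tilde{O}(1/\sqrt{\epsilon})$ such evaluations since $|M^t|=\Omega(\epsilon m)$ (which itself requires the observation that over a block of $K=1/\gamma$ unprojected iterations the weight drops by only $(1-\gamma)^{K}=\Omega(1)$). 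That gives $T\cdot W\cdot T/\sqrt{\epsilon}=\tilde{O}(W/(\sqrt{\epsilon}\gamma^4))$. Your sketch asserts each weak-learner call costs only $\tilde{O}(W)$ and attributes the extra $1/(\sqrt{\epsilon}\gamma^2)$ to the projections, so the correct final bound is reached only by an accounting coincidence; the missing ingredient is the $O(t)$ cost of on-the-fly evaluation of the implicit weight vector inside the example-preparation subroutine.
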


\begin{theorem}[Informal: Generalization error of QuantumBoost] Let $d$ be the VC-dimension of the hypothesis class $\mathcal{H}$ from which the $\gamma$-weak learner produces hypotheses. For a sufficiently large training set size 
\begin{equation}
       m = \Theta\bigg(\frac{d \log(d/(\delta\epsilon)) + \log(1/\delta)}{\epsilon^2} \bigg),
\end{equation}
for every data-generating distribution~$D$, with success probability $1-\delta$, QuantumBoost produces a hypothesis with generalization error $\leq \epsilon$ for a target function $f\in \mathcal{C}$.
\end{theorem}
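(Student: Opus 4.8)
\emph{Proof proposal.} The plan is to separate the claim into an algorithmic part, delivered by Theorem~\ref{thm:informal_time}, and a distribution-free statistical part, in the spirit of the classical VC-based generalization analysis of AdaBoost. Fix a data-generating distribution $D$ and a target $f\in\mathcal{C}$, draw the training set $S=\{(x_i,f(x_i))\}_{i=1}^m$ with the $x_i$ i.i.d.\ from $D$, and run QuantumBoost on $S$ with empirical-error parameter $\epsilon/2$. By Theorem~\ref{thm:informal_time}---after amplifying the algorithm's internal success probability to $1-\delta/2$, which costs only a logarithmic overhead in the runtime---with probability at least $1-\delta/2$ the algorithm outputs a hypothesis $H$ with $\Pr_{i\sim[m]}[H(x_i)\ne f(x_i)]\le\epsilon/2$, and moreover $H$ has the form $\sgn\!\big(\sum_{t=1}^{T}\alpha_t h_t(\cdot)-\theta\big)$ for some $h_1,\dots,h_T\in\mathcal{H}$, real coefficients $\alpha_t$, and $T$ the number of iterations of the run.

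First I would identify the hypothesis class in which $H$ necessarily lives. Let $\mathcal{H}_T$ be the class of all Boolean functions expressible as a threshold of a real linear combination of $T$ members of $\mathcal{H}$; every possible output of QuantumBoost after $T$ iterations lies in $\mathcal{H}_T$. A standard composition bound (Sauer--Shelah applied to the $T$ base functions together with the threshold, as in Blumer--Ehrenfeucht--Haussler--Warmuth) gives $\mathrm{VCdim}(\mathcal{H}_T)=O\!\big(Td\log(Td)\big)$. To reach the stated sample size, I would also extract from the analysis behind Theorem~\ref{thm:informal_time} that $T=\widetilde{O}(\mathrm{poly}(1/\gamma))$---that is, that the iteration count is independent of $\epsilon$ up to logarithmic factors, which is precisely what the lazy-projection/multiplicative-weights analysis provides---so that $\mathrm{VCdim}(\mathcal{H}_T)=\widetilde{O}(d)$, where $\widetilde{O}$ absorbs factors polynomial in $1/\gamma$ and polylogarithmic in $1/\epsilon$.

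Next I would invoke the fundamental theorem of statistical learning (uniform convergence for classes of finite VC dimension): there is an absolute constant $c$ such that whenever $m\ge c\cdot\big(\mathrm{VCdim}(\mathcal{H}_T)\log(1/\epsilon)+\log(1/\delta)\big)/\epsilon^2$, with probability at least $1-\delta/2$ over the draw of $S$ \emph{every} $h\in\mathcal{H}_T$ satisfies $\big|\Pr_{x\sim D}[h(x)\ne f(x)]-\Pr_{i\sim[m]}[h(x_i)\ne f(x_i)]\big|\le\epsilon/2$. Substituting $\mathrm{VCdim}(\mathcal{H}_T)=\widetilde{O}(d)$ and collapsing the nested logarithms shows that the choice $m=\Theta\!\big((d\log(d/(\delta\epsilon))+\log(1/\delta))/\epsilon^2\big)$ from the statement suffices. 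Union-bounding over the two bad events (QuantumBoost failing, and uniform convergence failing) leaves probability at least $1-\delta$ on which $H\in\mathcal{H}_T$ and $\Pr_{x\sim D}[H(x)\ne f(x)]\le\Pr_{i\sim[m]}[H(x_i)\ne f(x_i)]+\epsilon/2\le\epsilon$, which is the claim.

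The main obstacle is not any single step---each is textbook---but the parameter bookkeeping, in two spots. First, the theorem statement of the algorithmic part must expose the iteration count $T$ (not just the runtime) and certify $T=\widetilde{O}(\mathrm{poly}(1/\gamma))$: without $\epsilon$-independence of $T$, the VC dimension of $\mathcal{H}_T$ grows with $1/\epsilon$ and the stated $m$ becomes too small. Second, $m$ occurs both on the left of the sample-complexity inequality and inside the $\log(Td)$ term of the VC bound, so the final bound on $m$ is the solution of a fixed-point inequality---routine, since the self-dependence is logarithmic. A subtler point, depending on how weak learnability is phrased, is that QuantumBoost calls its weak learner only on smooth reweightings of the \emph{empirical} distribution on $S$; if the weak-learning guarantee is instead given with respect to $D$, one must additionally argue, via uniform convergence over $\mathcal{H}$ (comfortably covered by the chosen $m$), that a $\gamma$-weak learner for $f$ under $D$ remains an $\Omega(\gamma)$-weak learner on those reweightings, so that Theorem~\ref{thm:informal_time} indeed applies to the run on $S$.
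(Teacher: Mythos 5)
Your proposal is correct and follows essentially the same route as the paper: run the algorithm to get empirical error $\epsilon/2$ (Theorem~\ref{thm:error_bound}), bound the VC-dimension of the class of combined hypotheses by $\tilde{O}(T\cdot d)$ with $T=O(\log(1/\epsilon)/\gamma^2)$, and apply a uniform-convergence bound (the paper's Claim~\ref{claim:errors} with $\eta=\epsilon/2$) plus a union bound. Your closing worry about the weak learner's guarantee is already handled by the paper's definitions, since the $\gamma$-weak learner is assumed to work for \emph{every} distribution, including the smooth reweightings of the empirical distribution on $S$.
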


Our improvements over earlier boosting algorithms essentially come from three sources:
\begin{enumerate}
    \item It was noted in~\cite{barak2009uniform} that for Kale's version of SmoothBoost, approximate Bregman projections can be used to project measures onto the set of so-called high-density measures. These high-density measures, when normalized, are smooth distributions. This avoids the need to explicitly verify smoothness by merely projecting onto the set of high-density measures. We quantize Kale's SmoothBoost algorithm and show how such an (approximate) projection can be computed more efficiently with quantum techniques. 
    \item We adopt a lazy projection strategy that only projects at every $K^\text{th}$ iteration, where $K \ll T$, which allows us to reduce the average per-iteration runtime. This lazy projection strategy, inspired by convex optimization techniques (see Section $4.4$ in~\cite{bubeck2015convex}), appears to be novel in a boosting context. We prove QuantumBoost's convergence in $T=O(\log(1/\epsilon)/\gamma^2)$ iterations by controlling the error (measured in terms of relative entropy) accumulated through the approximate projections and carefully choosing~$K$. 
    \item As in the quantum algorithm presented in~\cite{izdebski2020improved}, examples for the weak learner may be prepared using amplitude amplification~\cite{BHMT00}. Even for the preparation of classical random examples, which are needed for a (classical) weak learner, first preparing a quantum example and then measuring it is more efficient than classical rejection-sampling.
\end{enumerate}
While previous boosting methods either required the maintenance of an explicit $m$-dimensional weight vector, or computing an approximation of the sum of its elements, we only keep an implicit representation of the weight vector, which enables us to compute its entries with $O(t)$ runtime at iteration~$t$ and to avoid the need of approximating its sum.  

\subsection{Comparison with related work}\label{sec:compare_result}
The runtime of boosting algorithms is often stated as a function of the Vapnik-Chervonenkis (VC) dimension~$d$ of the hypothesis class associated with the weak
learner~\cite{vapnik1968uniform}. Boosting algorithms that combine $T$ weak hypotheses (one from each iteration) into one strong hypothesis, typically do this by taking the sign of a (possibly weighted) sum of the weak hypotheses. It is proven in~\cite[p.~109]{shalev2014understanding} that the VC-dimension of the hypothesis class of all such strong hypotheses is $\tilde{O}(T \cdot d)$, and (by general VC-theory) a number of examples~$m$ of that order then suffice for learning a good strong hypothesis. In order to contextualize our  Theorem~\ref{thm:informal_time}, we present the provable runtimes associated with all relevant boosting algorithms in Table~\ref{table:complexities}, noting this correspondence between $m$ and~$d$, and suppressing all $\polylog$ factors in the stated runtimes to improve readability.

\setlength{\tabcolsep}{0.5em} 
{\renewcommand{\arraystretch}{2}
\begin{table*}[htb]
\begin{center}
\begin{small}
\begin{sc}
\begin{tabular}{l@{\hskip 6mm} c@{\hskip 10mm} c@{\hskip 7mm} c@{\hskip 0.75mm}}
\hline
Boosting algorithm & Total Runtime & Iterations ($T$)  & Ref.\\
\hline
1.~AdaBoost    & \large{$\frac{W}{\gamma^2}+\frac{d}{\epsilon^2 \gamma^4 } $} & $O(\log(1/\epsilon)/\gamma^2)$ & \cite{freund1997decision}\\     
2.~Quantum AdaBoost  & \large{ $\frac{W^{1.5} \sqrt{d} }{\epsilon \gamma^{11}}$} & $O(\log(1/\epsilon)/\gamma^2)$ & \cite{arunachalam2020quantum}\\
3.~SmoothBoost            & \large{$\frac{W}{\epsilon^2 \gamma^2} + \frac{d}{\epsilon^4 \gamma^4}$} & $O(\frac{1}{\epsilon \gamma^2})$ & \cite{servedio2003smooth}\\
4.~Quantum SmoothBoost & \large{$\frac{W}{\epsilon^{2.5}\gamma^4} + \frac{\sqrt{d}}{\epsilon^{3.5}\gamma^5}$} & $O(\frac{1}{\epsilon \gamma^2})$& \cite{izdebski2020improved}           \\
5.~Kale's SmoothBoost      & \large{$\frac{W}{\epsilon \gamma^2} + \frac{d}{\gamma^4} + \frac{1}{\epsilon \gamma^6}$} & $O(\log(1/\epsilon)/\gamma^2)$ & \cite{kale2007boosting, barak2009uniform} \\
6.~QuantumBoost              & \large{$\frac{W}{\sqrt{\epsilon}\gamma^4}$}  & $O(\log(1/\epsilon)/\gamma^2)$ & This work\\
\hline
\end{tabular}
\end{sc}
\end{small}
\end{center}
\caption{Runtimes of various boosting algorithms, suppressing  polylogs.} \label{table:complexities}
\end{table*}
}
Since there is no longer an explicit dependence on the number of examples~$m$ in QuantumBoost, there is no subsequent explicit dependence on the VC-dimension~$d$ either. This sounds too good to be true, and in some sense it is. A $\gamma$-weak learner produces a hypothesis that is promised to be correct on a $(1/2 + \gamma)$-fraction of the training set, with respect to any distribution and target function $f$ in a concept class $\mathcal{C}$. It can be shown that the VC-dimension~$d$ of the weak learner's hypothesis class is at least roughly $\gamma^2$ times the VC-dimension~$d_\mathcal{C}$ of the concept class $\cal C$ that the target function~$f$ comes from. By general VC-theory again, the sample complexity of the weak learner (and hence its runtime~$W$) must then be lower bounded by roughly $\gamma^2 d_\mathcal{C}$. 

We also note that the version of SmoothBoost of~\cite{barak2009uniform} can exploit implicit representations of weight vectors and avoid the explicit $d$-dependence as well. However, this would increase its $1/\gamma^6$ dependence to $1/\gamma^8$ for computing approximate Bregman projections with the implicit weight vectors.  Our algorithm notably improves the scaling in $\epsilon$ and matches the best known scaling in $\gamma$, seen in AdaBoost. It also does not make use of QCRAM, only a QCROM which stores the $m$ initial examples.

\paragraph{Organization:} In Section~\ref{sec:prelims} we provide an overview of our computational model and the necessary classical and quantum results to construct QuantumBoost. Section~\ref{sec:smooth_boost} discusses Kale's approach of smooth boosting with Bregman projections, while Section~\ref{sec:quantum_boost} introduces QuantumBoost and analyzes its runtime and correctness. Lastly, Section~\ref{sec:future} concludes with open questions and directions for future research.

\section{Preliminaries}\label{sec:prelims}

 In this section we include some notation and helpful results. All $\log$s are natural logarithms, unless explicitly stated otherwise. Expressions like $\tilde{O}(f(n))$ suppress polylogarithmic factors: $\tilde{O}(f(n))$ is defined as $f(n)(\log n)^{O(1)}$.

\subsection{Computational model}
The computational model we assume here is a classical RAM model with a quantum co-processor. In addition to its classical operations, the classical machine can prepare a description of a quantum circuit and an initial computational basis state and send it to the quantum co-processor, which runs the circuit on the initial state, measures the final state, and returns the measurement outcome.
We may fix any universal set of elementary quantum gates for our quantum circuits, for instance Hadamard, $T$, and CNOT-gates; the precise choice doesn't matter since each universal gate set can very efficiently approximate the gates in any other gate set.

We assume the input bits, in particular the ones of the $m$ initial examples, are given in a quantum read-only classical memory (QCROM). The QCROM stores some $N$-bit string~$z=z_0\ldots z_{N-1}$, and we have a unitary $O_z$ available that maps $O_z:\ket{i,b}\mapsto\ket{i,b\oplus z_i}$ for all $i\in\{0,\ldots,N-1\}$ and $b\in\{0,1\}$. Such a unitary is called a ``query (to $z$)''. It can be used by the classical RAM machine, but can also be included in the description of the quantum circuits that the classical machine sends to the quantum co-processor; the quantum circuit may apply $O_z$ on superpositions.
Like with classical ROM and RAM, we assume one QCROM query can be done very fast, at polylogarithmic cost in the memory-size $N$. 
We do not need any quantum-\emph{writable} classical memory (QCRAM) in this paper. Besides the QCROM, whose contents do not change during the algorithm, we only need classical memory that is \emph{not} accessed in superposition. 
It should be noted that QCROM (and even more so QCRAM) are controversial notions in quantum computing, since they are in practice very hard to implement fast in noisy quantum hardware. However, we feel that for a theory paper such as this, they are acceptable notions, since conceptually they just combine the (hopefully uncontroversial) notions of classical RAM and quantum superposition. 

When we refer to the cost or runtime of an algorithm or subroutine, we mean the total number of classical RAM operations used plus the total number of elementary gates in the quantum circuits sent to the quantum co-processor, counting a QCROM query as one gate (since our bounds will suppress polylogs, it doesn't really matter whether we treat the cost of a QCROM query as a constant or as polylog).

A boosting algorithm is a meta-algorithm to some extent: we can plug in an arbitrary weak learner $\cal W$ (quantum or classical) with its hypothesis class $\mathcal{H}$. We use $W$ to denote the runtime cost of our weak learner and also as a (possibly quite loose) upper bound on the number of (quantum or classical) examples that it needs to be fed.
The hypothesis class $\mathcal{H}$ from which the weak hypotheses $h_t$ come, could also take many forms and we have to say something about how expensive it is to compute $h(x)$ for some $h\in{\mathcal{H}}$ and $x\in{\cal X}$.
To abstract away from this, we will assume we have an oracle $O_{\mathcal{H}}$ available that maps
$\ket{h}\ket{x}\ket{b}\mapsto \ket{h}\ket{x}\ket{b\oplus h(x)}$, i.e., that evaluates hypotheses $h$ for us at a given point~$x$. We assume this $O_{\mathcal{H}}$ has $\tilde{O}(1)$ cost, but this is merely a placeholder. For example, if $\mathcal{H}$ is the class of $n^2$-sized Boolean circuits, then the cost of running $O_{\mathcal{H}}$ would be $O(n^2)$, and all the runtimes stated in the paper would have to be multiplied by this cost.

\subsection{High-density measures and approximate Bregman projections}

We will need the following results about measures and projections in order to introduce the techniques behind boosting with smooth distributions. 

\begin{definition}[High-density measures]\label{def:highmeasures} Let $X$ be a finite set with discrete measure $M: X \to [0,1]$. Denote $|M| = \sum_{x \in X} M(x)$ as the weight of $M$ and $\mu(M)=|M|/|X|$ its density, which is a number in $[0,1]$. The set $\Gamma_\epsilon$ is the set of high-density measures, defined as 
\begin{equation}
    \Gamma_\epsilon = \{M\ |\ \mu(M)\geq \epsilon \}.
\end{equation} 
\end{definition}
\begin{definition}[Smooth distributions]\label{def:smoothness} For a distribution $D$ on a finite set $X$, we say that $D$ is $\epsilon$-smooth if
\begin{equation}
    ||D||_{\infty} \leq \frac{1}{\epsilon \cdot |X|},
\end{equation}
where $||D||_\infty = \max_{x \in X} D(x)$. So no probability is more than a $1/\epsilon$-factor bigger than uniform.
We denote the set of such distributions by $\mathcal{P}_\epsilon$.
\end{definition}

\begin{fact}[High-density and $\epsilon$-smoothness]\label{fact:density} 
Given a high-density measure $M: X \to [0,1]$  (i.e., $M \in \Gamma_\epsilon$), there is a natural induced $\epsilon$-smooth probability distribution $D_M(x) = M(x)/|M|$, since $\frac{1}{|M|}M(x) \leq \frac{1}{\epsilon |X|}$ for all $x \in X$ if $\mu(M) \geq \epsilon$.
\end{fact}

Fact~\ref{fact:density} will come in handy when proving the performance guarantees of smooth boosting methods, since once we project onto the set of high-density measures, those measures, after normalization, are $\epsilon$-smooth distributions. 

\begin{definition}[Bregman projection]
    The Bregman projection operator, which projects a measure $N$ onto the set of high-density measures $\Gamma_\epsilon$, is defined as follows
\begin{equation}
P_\epsilon(N) = \mathrm{argmin}_{M \in \Gamma_\epsilon}\ \mathrm{KL}(M ||N),
\end{equation}
where 
\begin{equation}
\mathrm{KL}(M || N)= \sum_x\bigg( M(x) \log \frac{M(x)}{N(x)} + N(x) - M(x)\bigg) 
\end{equation}
is the Kullback-Leibler (KL) divergence between measures $M$ and $N$.
One can show that the ``argmin'' is unique, so $P_\epsilon$ is indeed a function.
\end{definition}

\begin{theorem}[Bregman's theorem~\cite{bregman1967relaxation}]\label{thm:bregman} Let $M, M^*, N$ be measures such that $M \in \Gamma_\epsilon$ and $M^*$ is the exact projection of $N$ onto $\Gamma_\epsilon$. The following holds:
\begin{equation*}
    \mathrm{KL}(M||M^*) + \mathrm{KL}(M^*||N) \leq \mathrm{KL}(M||N).
\end{equation*}
\end{theorem}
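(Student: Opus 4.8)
The plan is to recognize $\mathrm{KL}(\cdot\,\|\,\cdot)$ as the Bregman divergence generated by the (unnormalized) negative-entropy potential $\phi(M)=\sum_{x}\big(M(x)\log M(x)-M(x)\big)$, whose gradient is $\nabla\phi(M)=(\log M(x))_{x}$; indeed the displayed formula for $\mathrm{KL}(M\|N)$ is exactly $\phi(M)-\phi(N)-\langle\nabla\phi(N),\,M-N\rangle$. Two structural facts then drive everything: the feasible set $\Gamma_\epsilon=\{M\in[0,1]^{X}:\sum_{x}M(x)\ge\epsilon|X|\}$ is convex (it is cut out by linear inequalities), and by definition $M^{*}=P_\epsilon(N)$ minimizes the convex function $F(M'):=\mathrm{KL}(M'\|N)$ over $\Gamma_\epsilon$.

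First I would prove the \emph{three-point identity} by directly expanding the three KL terms: all the additive $N(x)$, $M(x)$ and $M^{*}(x)$ pieces cancel, leaving
\[
\mathrm{KL}(M\|N)=\mathrm{KL}(M\|M^{*})+\mathrm{KL}(M^{*}\|N)+\sum_{x}\big(M(x)-M^{*}(x)\big)\log\frac{M^{*}(x)}{N(x)}.
\]
So the theorem reduces to the single claim that the ``defect'' $\sum_{x}(M(x)-M^{*}(x))\log\frac{M^{*}(x)}{N(x)}$ is nonnegative.

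Next I would deduce this from first-order optimality. Since $\partial F/\partial M'(x)=\log(M'(x)/N(x))$, the defect equals $\langle\nabla F(M^{*}),\,M-M^{*}\rangle$. Because $M\in\Gamma_\epsilon$ and $\Gamma_\epsilon$ is convex, the segment $M^{*}+t(M-M^{*})$ stays in $\Gamma_\epsilon$ for $t\in[0,1]$, and since $M^{*}$ minimizes $F$ over $\Gamma_\epsilon$, the one-sided derivative $\frac{d}{dt}\big|_{t=0^{+}}F\!\big(M^{*}+t(M-M^{*})\big)=\langle\nabla F(M^{*}),M-M^{*}\rangle$ is $\ge0$. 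Substituting into the identity gives $\mathrm{KL}(M\|M^{*})+\mathrm{KL}(M^{*}\|N)\le\mathrm{KL}(M\|N)$, which is the theorem. (Geometrically this is the familiar Pythagorean picture: $\mathrm{KL}(M\|N)$ splits into the two ``legs'' $\mathrm{KL}(M\|M^{*})$ and $\mathrm{KL}(M^{*}\|N)$, plus a cross term that optimality forces to have the right sign.)

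The one place that needs care — and the main obstacle to a fully rigorous write-up — is that the gradient $\log(M^{*}(x)/N(x))$ is finite only if $M^{*}(x)>0$ for every $x$, so I must argue the projection has full support. This is not an extra hypothesis: assuming $N$ is strictly positive (which holds in our application, and can otherwise be arranged by restricting to $\supp(N)$), the coordinatewise derivative $\log(M'(x)/N(x))\to-\infty$ as $M'(x)\to0^{+}$, so from any feasible point one strictly decreases $F$ by shifting an infinitesimal amount of mass onto a vanishing coordinate and off a positive one — feasible since total mass is preserved and is $\ge\epsilon|X|>0$. Hence $M^{*}$ has no zero coordinate, $F$ is differentiable at $M^{*}$, and the optimality step is legitimate. (Equivalently: if $M^{*}(x)=0$ while $M(x)>0$ then $\mathrm{KL}(M\|M^{*})=+\infty$ and the inequality could fail, which is precisely why such zeros cannot occur for an entropic projection from a positive $N$.) With positivity in hand, the proof is just the two-line combination of the three-point identity with the optimality inequality, and no limiting arguments are needed.
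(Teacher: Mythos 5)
The paper does not prove this statement---it is quoted as a black box from Bregman's 1967 paper---so there is no in-paper argument to compare against; your proposal has to stand on its own, and it does. The skeleton is the standard one for the generalized Pythagorean inequality: the three-point identity
\[
\mathrm{KL}(M\|N)=\mathrm{KL}(M\|M^{*})+\mathrm{KL}(M^{*}\|N)+\sum_{x}\bigl(M(x)-M^{*}(x)\bigr)\log\tfrac{M^{*}(x)}{N(x)}
\]
checks out by direct expansion (the linear $\pm M,\pm M^{*},\pm N$ terms cancel exactly as you claim), the cross term is indeed $\langle\nabla F(M^{*}),M-M^{*}\rangle$ for $F(\cdot)=\mathrm{KL}(\cdot\|N)$, and the variational inequality $\langle\nabla F(M^{*}),M-M^{*}\rangle\ge 0$ is exactly the first-order optimality condition for minimizing $F$ over the convex set $\Gamma_\epsilon=\{M\in[0,1]^{X}:\sum_x M(x)\ge\epsilon|X|\}$ (convex as a box intersected with a half-space). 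You were right to flag the support issue as the only delicate point, and your resolution is sound; note that it is also corroborated independently by the paper's Lemma~\ref{lem:implicit_rep}, which exhibits the projection explicitly as $M^{*}=\min(1,c\cdot N)$ with $c\ge 1$, so $M^{*}(x)>0$ wherever $N(x)>0$. Two micro-remarks: (i) for the full-support argument you do not even need to shift mass off another coordinate, since the constraint $\sum_x M(x)\ge\epsilon|X|$ is one-sided and simply increasing the zero coordinate stays feasible; (ii) coordinates with $N(x)=0$ make $\mathrm{KL}(M\|N)=+\infty$ unless $M(x)=0$, so the inequality is vacuous or the coordinate is droppable, as you say. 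The proof is correct and complete modulo these cosmetic points.
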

Note that computing the Bregman projection exactly requires time linear in $|X|$ in general, to examine the weight of each $x \in X$. Luckily, Barak, Hardt, and Kale~\cite{barak2009uniform} proved that the output of the projection operator~$P_\epsilon$ has an efficient implicit representation which will facilitate a much more efficient computation of an approximate projection.

\begin{lemma}[Implicit representation of Bregman projection]\label{lem:implicit_rep} Let $N$ be a measure with support at least $\epsilon |X|$ and $c \geq 1$ be the smallest constant such that the measure $M^* = \min(1, c \cdot N)$ has density~$\epsilon$. Then $P_\epsilon(N) = M^*$. 
\end{lemma}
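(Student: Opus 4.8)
The plan is to characterize the exact Bregman projection $P_\epsilon(N)$ via the KKT conditions of the convex program $\min_{M\in\Gamma_\epsilon}\KL(M\|N)$, and to check that the candidate $M^* = \min(1, c\cdot N)$ satisfies them. First I would observe that the feasible set $\Gamma_\epsilon = \{M : \sum_x M(x) \geq \epsilon |X|,\ 0 \leq M(x) \leq 1\}$ is closed and convex, and that $M\mapsto\KL(M\|N)$ is strictly convex on its domain (since $N$ has support at least $\epsilon|X|$, there is a feasible $M$ with finite objective, e.g.\ a suitable truncation/scaling of $N$), so the minimizer exists and is unique; this also justifies the remark that $P_\epsilon$ is a well-defined function. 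The gradient of the objective in coordinate $x$ (for $N(x)>0$) is $\partial_{M(x)}\KL(M\|N) = \log(M(x)/N(x))$.

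Next I would write the Lagrangian with a multiplier $\lambda \geq 0$ for the density constraint $\epsilon|X| - \sum_x M(x) \leq 0$ and multipliers for the box constraints $0\le M(x)\le 1$. Stationarity gives, for each $x$ with $0 < M(x) < 1$, the condition $\log(M(x)/N(x)) = \lambda$, i.e.\ $M(x) = \ee^{\lambda} N(x)$; for $x$ with $M(x)=1$ the upper-bound multiplier is active and we get $\log(1/N(x)) \le \lambda$, i.e.\ $\ee^\lambda N(x) \ge 1$; and the lower bound $M(x)=0$ would force $N(x)=0$. Setting $c := \ee^{\lambda} \geq 1$, these conditions say exactly that $M(x) = \min(1, c\cdot N(x))$ for every $x$ (the coordinates where $c\cdot N(x)\le 1$ contribute $cN(x)$, the others are clipped to $1$, and coordinates with $N(x)=0$ get $M(x)=0=\min(1,c\cdot 0)$). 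Complementary slackness for $\lambda$ forces either $\lambda = 0$ (i.e.\ $c=1$, which happens iff $N$ already has density $\ge \epsilon$, in which case $M^*=\min(1,N)$) or the density constraint is tight: $\mu(M^*) = \epsilon$. In either case $c$ is precisely ``the smallest constant $\ge 1$ such that $\min(1,c\cdot N)$ has density~$\epsilon$'' — smallest because $c\mapsto \mu(\min(1,c\cdot N))$ is continuous and nondecreasing in $c$, strictly increasing until it saturates at $1$, and equals $\mu(N)$ at $c=1$; so there is a unique such $c$ (taking $c=1$ when $\mu(N)\ge\epsilon$), matching the lemma's definition.

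Finally I would invoke sufficiency of KKT for convex programs: since the objective is convex, the constraints are affine/convex, and Slater's condition holds (the support hypothesis guarantees a strictly feasible point), any point satisfying the KKT conditions is the global minimizer. We have exhibited multipliers $\lambda = \log c \geq 0$ (and the induced box multipliers) certifying that $M^* = \min(1, c\cdot N)$ satisfies KKT and lies in $\Gamma_\epsilon$ by construction, hence $P_\epsilon(N) = M^*$.

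I expect the main obstacle to be the careful bookkeeping of the box-constraint multipliers and the edge cases — in particular verifying feasibility of $M^*$ (that $\mu(M^*)\ge\epsilon$, which is immediate from the definition of $c$ once one knows such a $c$ exists and is finite), checking that $c<\infty$ (it is, because $\min(1,c\cdot N)$ has density at least $|\supp(N)|/|X| \ge \epsilon$ for all large enough $c$, using the support hypothesis), and handling coordinates with $N(x)=0$ cleanly in the KL divergence. An alternative, slicker route that avoids explicit Lagrange multipliers is to combine Bregman's theorem (Theorem~\ref{thm:bregman}) with a direct verification that $\KL(M\|M^*)\le\KL(M\|N)$ for all $M\in\Gamma_\epsilon$, exploiting the product/clipped structure of $M^*$; but the KKT argument is the most transparent and is the one I would write up.
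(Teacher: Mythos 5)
Your KKT argument is correct and is essentially the approach of the proof the paper defers to (\cite[Lemma 3.1]{barak2009uniform}), which likewise rests on convexity and first-order optimality (differentiability) of the KL objective over the polytope $\Gamma_\epsilon$. The only point worth spelling out in a full write-up is the edge case $\mu(N)\ge\epsilon$, where complementary slackness forces $\lambda=0$ and $c=1$ --- which you already flag.
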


The proof is given in~\cite[Lemma 3.1]{barak2009uniform} and uses the convexity and differentiability of a function of the KL-divergence defined over the polytope $\Gamma_\epsilon$.
Note that if we have a representation of $N$ available, then additionally storing the number $c$ gives us an (implicit) representation of $M^*$.  
In order to exploit this implicit representation, we need the notion of an \emph{approximate} Bregman projection.

\begin{definition}[Approximate Bregman projection]\label{defn_approx} 
    Let $M^* = P_\epsilon(N)$ be the exact Bregman projection of $N$ onto $\Gamma_\epsilon$. Then $\widetilde{M}$ is an $\alpha$-approximation of $M^*$ if 
    \begin{enumerate}
        \item $\widetilde{M} \in \Gamma_\epsilon$, and 
        \item $\mathrm{KL}(M||\widetilde{M}) \leq \mathrm{KL}(M||M^*) + \alpha \quad \forall \ M \in \Gamma_\epsilon$.
    \end{enumerate}
\end{definition}
We also use the following fundamental identity, relating the KL-divergence between measures to the relative entropy (RE) between their normalized distributions.

\begin{fact}[KL-RE Relation]\label{identity:KLRE} 
The following identity holds for the KL-divergence between measures $A$ and $B$, and the relative entropy (RE) between their respective normalized distributions $D_A$ and $D_B$: 
    \begin{equation}
\KL(A||B) = |A| \RE(D_A||D_B) + |A|\log\left(\frac{|A|}{|B|}\right) + |B|-|A|,
\end{equation} 
where the relative entropy is defined as 
\begin{equation}
    \RE(D_A||D_B) = \sum_{x}D_A(x)\log\frac{D_A(x)}{D_B(x)}.
\end{equation}
\end{fact}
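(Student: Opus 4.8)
The statement to prove is Fact~\ref{identity:KLRE}, the KL-RE relation. The plan is to expand the definition of $\KL(A||B)$ directly and substitute the relationship between a measure and its normalized distribution.

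\textbf{Approach.} First I would write $D_A(x) = A(x)/|A|$ and $D_B(x) = B(x)/|B|$, equivalently $A(x) = |A| D_A(x)$ and $B(x) = |B| D_B(x)$, and plug these into
\begin{equation*}
\KL(A||B) = \sum_x \left( A(x) \log \frac{A(x)}{B(x)} + B(x) - A(x) \right).
\end{equation*}
The term $\sum_x (B(x) - A(x))$ is immediately $|B| - |A|$, which already accounts for the last two summands on the right-hand side of the claimed identity. So the real work is the first term.

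\textbf{Key step.} For the logarithmic term, I substitute to get
\begin{equation*}
\sum_x A(x) \log \frac{A(x)}{B(x)} = \sum_x |A| D_A(x) \log \frac{|A| D_A(x)}{|B| D_B(x)} = |A| \sum_x D_A(x) \left( \log \frac{D_A(x)}{D_B(x)} + \log \frac{|A|}{|B|} \right).
\end{equation*}
Splitting the sum, the first piece is $|A| \sum_x D_A(x) \log \frac{D_A(x)}{D_B(x)} = |A| \RE(D_A||D_B)$ by definition, and the second piece is $|A| \log \frac{|A|}{|B|} \cdot \sum_x D_A(x) = |A| \log \frac{|A|}{|B|}$ since $D_A$ is a probability distribution summing to~$1$. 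Adding back the $|B| - |A|$ term gives exactly the claimed identity.

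\textbf{Main obstacle.} There is essentially no obstacle here; this is a routine algebraic manipulation. The only things to be slightly careful about are the standard conventions $0 \log 0 = 0$ (so that $x$ with $A(x)=0$ contribute nothing to the log-sum) and that the identity implicitly assumes $|A|, |B| > 0$ and $\supp(A) \subseteq \supp(B)$ so that all the logarithms and normalized distributions are well-defined; these hold in all the applications in this paper (the measures involved always have full or near-full support). One could note that the identity makes sense termwise even when some $A(x) = 0$, using the convention above.
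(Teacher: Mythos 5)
Your proof is correct: expanding $\KL(A||B)$ with $A(x)=|A|D_A(x)$, $B(x)=|B|D_B(x)$ and splitting the logarithm yields exactly the claimed identity, and your side remarks about $0\log 0=0$ and support conditions are the right caveats. The paper states this Fact without proof, and your computation is precisely the routine verification the authors intend, so there is nothing to compare or correct.
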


\subsection{PAC (``Probably Approximately Correct'') learning}
For completeness, we briefly explain weak and strong learners in the context of PAC learning, as well as the sample complexity bounds for generalization error guarantees. Throughout the manuscript, we make reference to a training set $S:=\{(x_i, y_i)\}_{i=1}^m$ where $x_i \in \mathcal{X}$ is typically drawn from an unknown distribution $D$ and the $y_i \in \mathcal{Y}$ are the labels generated by a true target function $f: \mathcal{X} \to \mathcal{Y}$ that we wish to learn, i.e.\ $y_i = f(x_i)$. Furthermore, let $X:= \{x_i : (x_i, y_i) \in S\}$ be the set of $x_i$'s in the training set.

\begin{definition}[$(\epsilon, \delta)$-PAC learner~\cite{valiant1984theory}]  An algorithm $\mathcal{A}$ is an $(\epsilon, \delta)$-PAC learner for concept class $\mathcal{C}$ with hypothesis class $\mathcal{H}$ if, for every target function $f\in\mathcal{C}$ and every distribution $D$ on $f$'s domain $\mathcal{X}$, $\mathcal{A}$ outputs a hypothesis $h \in \mathcal{H}$ with small generalization error:
\begin{equation}
    \mathrm{err}(h) = \mathrm{Pr}_{x \sim D} [h(x) \neq f(x)] \leq \epsilon
\end{equation}
with success probability $1-\delta$ over randomly drawn examples $\{(x_i, f(x_i))\}_{i=1}^m$, where the $x_i$'s are drawn i.i.d.\ from~$D$.
\end{definition}
This setting is sometimes called \emph{distribution-independent} learning, since the same learning algorithm $\mathcal{A}$ should work for any distribution~$D$. The number $\epsilon$ is typically referred to as the \emph{generalization error} of the hypothesis (or of $\mathcal{A}$). The number of examples that are necessary and sufficient for learning depends on the 
VC-dimension~\cite{vapnik1974theory} of the relevant concept class and the desired generalization error. The VC-dimension is defined as follows: a set $W\subseteq{\mathcal X}$ is \emph{shattered} by $\mathcal{H}$ if, for each of the $2^{|W|}$ possible binary labelings of the elements of $W$, there is an $h\in\mathcal{H}$ consistent with that labeling; then $\text{VC-dim}(\mathcal{H})$ is the size $|W|$ of a largest shattered set~$W$.

A $\gamma$-weak learner is simply a $(1/2-\gamma, 0)$-PAC learner for the concept class $\mathcal{C}$ with a hypothesis that comes from $\mathcal{H}$, given access to a training set. Boosting combines hypotheses generated by a weak learner to form a new hypothesis~$H$. Since $H$ is a combination of several weak hypotheses $h \in \mathcal{H}$, $H$ subsequently belongs to a concept class much larger than $\mathcal{H}$. Letting $\bar{\mathcal{H}}$ denote this larger class to which $H$ belongs, we may think of a strong learner as an $(\epsilon,\delta)$-PAC learner with hypothesis class $\bar{\mathcal{H}}$. The following result  illustrates the relationship between the VC-dimensions of $\mathcal{H}$ and $\bar{\mathcal{H}}$ in the case where the weak hypotheses are combined by a majority vote.

\begin{claim}[VC-dimension of $\bar{\mathcal{H}}$]
    Let $d$ denote the VC-dimension of the concept class $\mathcal{H}$. Then the hypothesis class $\bar{\mathcal{H}} = \{\mathrm{MAJ}(h_1, \ldots, h_T)\ |\ h_i \in \mathcal{H}\}$ has VC-dimension $\tilde{O}(T \cdot d)$.
\end{claim}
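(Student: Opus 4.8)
The plan is to bound the VC-dimension of $\bar{\mathcal{H}}$ by a standard two-step argument: first count how many distinct labelings of a fixed set of points can be produced by elements of $\bar{\mathcal{H}}$, using the Sauer--Shelah lemma applied to $\mathcal{H}$; then invoke the contrapositive of Sauer--Shelah to turn that growth bound back into a VC-dimension bound for $\bar{\mathcal{H}}$. Concretely, fix a set $A \subseteq \mathcal{X}$ with $|A| = N$ that is shattered by $\bar{\mathcal{H}}$. Every $h \in \mathcal{H}$ induces one of at most $\binom{N}{\le d} = O((eN/d)^d)$ labelings of $A$ by the Sauer--Shelah lemma, since $\mathcal{H}$ has VC-dimension $d$. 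A hypothesis $H = \mathrm{MAJ}(h_1,\dots,h_T) \in \bar{\mathcal{H}}$ is determined (as a function on $A$) by the tuple of labelings that $h_1,\dots,h_T$ induce on $A$, so the number of distinct functions $\bar{\mathcal{H}}$ can realize on $A$ is at most $\left( O\!\left( (eN/d)^d \right) \right)^T = O\!\left( (eN/d)^{dT} \right)$.

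Since $A$ is shattered by $\bar{\mathcal{H}}$, we need $2^N \le O\!\left( (eN/d)^{dT} \right)$, i.e. $N \le dT \log_2(eN/d) + O(1)$. This is an inequality of the form $N \le c\,\log N$ with $c = \Theta(dT)$ (treating $d$ as at least a constant and absorbing the $\log d$ term), and the standard fact that $N \le c \log N$ forces $N = O(c \log c)$ then gives $N = O(dT \log(dT))= \tilde{O}(dT)$. Hence $\mathrm{VC\text{-}dim}(\bar{\mathcal{H}}) = \tilde{O}(T \cdot d)$, as claimed. (If one wants to be careful about the edge case $d = 0$, i.e.\ $\mathcal{H}$ a single hypothesis up to ties, then $\bar{\mathcal{H}}$ is also trivial and the bound holds vacuously; similarly small constant $d$ is absorbed into the $\tilde{O}$.)

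The only genuinely delicate point is making sure the majority-vote combination does not blow up the count beyond the naive "product of $T$ labelings" bound — but it does not, because $H|_A$ is a deterministic function of $(h_1|_A,\dots,h_T|_A)$ (ties in the majority are broken by a fixed rule, e.g.\ output $+1$), so distinct $H$'s require distinct tuples, giving exactly the $(\text{growth function of }\mathcal{H})^T$ bound. A secondary subtlety is the transition from the transcendental inequality $2^N \le (eN/d)^{dT}$ to a clean $\tilde{O}(dT)$ bound; this is routine (e.g.\ substitute $N = 4dT\log(dT)$ and check the inequality fails, so the shattered set is smaller), and it is exactly the place where the suppressed logarithmic factor in $\tilde{O}$ comes from. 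I would cite the Sauer--Shelah lemma and the relevant computation from~\cite[p.~109]{shalev2014understanding} or~\cite{shalev2014understanding} rather than reprove them. I expect the whole argument to take well under a page.
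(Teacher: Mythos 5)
Your proof is correct and follows essentially the same route as the argument the paper points to (\cite[p.~109]{shalev2014understanding}, Lemma~10.3 there): bound the growth function of $\bar{\mathcal{H}}$ on an $N$-point set by the $T$-th power of the growth function of $\mathcal{H}$ via Sauer--Shelah, then solve $2^N \le (eN/d)^{dT}$ to get $N = O(dT\log(dT)) = \tilde{O}(Td)$. No gaps; the tie-breaking and small-$d$ edge cases you flag are handled appropriately.
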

The proof can be found in~\cite[p.~109]{shalev2014understanding}. 

\paragraph{Classical and quantum examples:} Classically, one provides learners with $m$ examples, which are random variables of the form $(x_i, y_i)$. Quantum learners, however, are able to access $m$ copies of the state
\begin{equation}
    \sum_{x_i \in X} \sqrt{D(x_i)}\ket{x_i, y_i}.
\end{equation}
One can think of this ``quantum example'' state as a coherent version of the classical random example. The quantum learner can perform a POVM measurement over the copies, where each outcome is associated with a hypothesis.
Even with this ability, it turns out that the number of classical and quantum examples needed for PAC learning are the same up
to a constant factor; in other words, having quantum examples available does not significantly reduce the sample complexity in distribution-independent PAC learning~\cite{arunachalam2018optimal}. For our purposes, in boosting, we are initially given $m$ classical examples, but we actually allow weak \emph{quantum} learners, thus making the class of weak learners that we can boost more general and more powerful. Accordingly, our boosting algorithm will have to prepare the quantum examples that are fed to a weak quantum learner at each iteration. We explicitly account for this example-preparation cost in QuantumBoost. 

\paragraph{Empirical error vs generalization error:} Another PAC learning result that will be important for the error guarantee of QuantumBoost, relates the generalization error of a hypothesis to its empirical error.
The empirical error of a hypothesis $h\in \mathcal{H}$ w.r.t.\ a training set $S = \{(x_i, y_i)\}_{i=1}^m$ is
\begin{equation}
    \widehat{\mathrm{err}}(h) =  \mathrm{Pr}_{i\sim[m]}[h(x_i) \neq y_i] 
\end{equation}
    where $i\sim[m]$ denotes that $i$ is taken uniformly at random from $[m] = \{1, 2, \dots, m\}$.
The empirical error of $h$ only depends on the training set~$S$. In contrast, its generalization error $\mathrm{err}(h) = \mathrm{Pr}_{x \sim D} [h(x) \neq f(x)]$ depends on both the distribution~$D$ and the target function~$f$.
The following claim implies that if $m$ is large enough, then small 
empirical error implies small generalization error.
\begin{claim}\label{claim:errors}[Generalization error and empirical error] Let $d$ be the finite VC-dimension of the hypothesis class $\mathcal{H}$. For a randomly chosen training set $S$ of size $m$ and any $\eta > 0$,
    \begin{equation}
    \Pr[\exists\ h\in \mathcal{H}: \mathrm{err}(h) - \widehat{\mathrm{err}}(h) > \eta] \leq 8 \bigg(\frac{em}{d}\bigg)^d \exp(-\frac{m\eta^2}{32}),
    \end{equation} 
    where the generalization error $\mathrm{err}(h)$ is taken with respect to the target function $f$ from concept class $\mathcal{C}$ and the empirical error $\widehat{\mathrm{err}}(h)$ is with respect to the training set $S$.
\end{claim}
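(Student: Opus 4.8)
The plan is to prove Claim~\ref{claim:errors} by the standard uniform-convergence argument for VC classes, which goes back to Vapnik--Chervonenkis~\cite{vapnik1968uniform} and is worked out (with these exact constants) in textbooks such as~\cite{shalev2014understanding,vapnik1974theory}. The quantity of interest is the deviation between the true error $\mathrm{err}(h)$ and the empirical error $\widehat{\mathrm{err}}(h)$, uniformly over $h\in\mathcal{H}$. Note that for a fixed $h$, the indicator $\indic[h(x)\neq f(x)]$ is a Bernoulli$(\mathrm{err}(h))$ random variable, and $\widehat{\mathrm{err}}(h)$ is its empirical average over the i.i.d.\ sample $x_1,\dots,x_m\sim D$; the event of interest is a one-sided uniform deviation of these empirical averages.

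First I would introduce the \emph{loss class} $\mathcal{L}=\{x\mapsto\indic[h(x)\neq f(x)] : h\in\mathcal{H}\}$ and observe that, since $f$ is fixed, $\mathrm{VC\text{-}dim}(\mathcal{L})=\mathrm{VC\text{-}dim}(\mathcal{H})=d$ (composing each $h$ with the fixed map $b\mapsto\indic[b\neq f(x)]$ does not change which subsets can be shattered). Second I would apply the classical symmetrization trick: draw a ``ghost sample'' $S'$ of the same size $m$ and show that
\begin{equation*}
\Pr_S\big[\exists h:\ \mathrm{err}(h)-\widehat{\mathrm{err}}_S(h)>\eta\big]\ \leq\ 2\,\Pr_{S,S'}\big[\exists h:\ \widehat{\mathrm{err}}_{S'}(h)-\widehat{\mathrm{err}}_S(h)>\eta/2\big],
\end{equation*}
valid for $m\geq 8/\eta^2$ (for smaller $m$ the claimed bound is trivially at least~$1$). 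Third, I would condition on the multiset $S\cup S'$ of $2m$ points and invoke Sauer's lemma: the loss class realizes at most $(2em/d)^d$ distinct patterns on those $2m$ points, so a union bound over these patterns reduces the problem to a single comparison of two empirical means on a random split of $2m$ fixed labels. Fourth, for each fixed pattern I would bound the probability that the split separates the empirical means by more than $\eta/2$ using a Hoeffding-type (permutation) tail bound, giving $\exp(-m\eta^2/8)$ or so per pattern; combining with the union bound and the factor~$2$ from symmetrization, and being slightly generous with constants, yields the stated $8(em/d)^d\exp(-m\eta^2/32)$.

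The main obstacle, such as it is, is bookkeeping rather than conceptual: one has to be careful with the constants and the regime $m\geq 8/\eta^2$ so that the symmetrization step is valid, and one has to track how Sauer's lemma is applied to the $2m$-point sample (the $(2em/d)^d$ bound) versus how it appears in the final statement (where it is loosened to $8(em/d)^d$). I would present the four steps above in order, cite~\cite[Theorem~6.11]{shalev2014understanding} or the analogous statement in~\cite{vapnik1974theory} for the precise constants, and remark that the $\eta$ in the claim will later be set to a constant fraction of~$\epsilon$ so that taking $m=\Theta\big((d\log(d/(\delta\epsilon))+\log(1/\delta))/\epsilon^2\big)$ drives the right-hand side below~$\delta$, which is exactly what the generalization-error theorem needs.
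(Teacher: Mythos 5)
Your proposal is correct and coincides with the paper's treatment: the paper does not prove this claim itself but cites it as Theorem~2.5 of Schapire and Freund's boosting book, whose proof is exactly the symmetrization--Sauer--Hoeffding argument you outline (with the same caveat that the constants $8$ and $32$ and the trivial regime $m<8/\eta^2$ require the bookkeeping you flag). No gap to report.
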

The proof can be found in~\cite[Theorem 2.5]{schapire2013boosting}. We will make use of this claim in Section~\ref{sec:loss_bound} (Corollary~\ref{cor:generror}) when bounding the generalization error achieved by QuantumBoost.

\subsection{Required quantum subroutines}
 QuantumBoost uses several quantum subroutines, which we include here. 

\begin{theorem}[Amplitude estimation{~\cite[Section~4]{BHMT00}}]\label{thm:amplitude_estimation}
Let $\delta \in(0,1)$. Given a natural number $A$ and access to an $(n + 1)$-qubit unitary $U$ satisfying
\begin{equation}
    U\ket{0^n}\ket{0}= \sqrt{a}\ket{\phi_0}\ket{0} +\sqrt{1-a}\ket{\phi_1}\ket{1},
\end{equation}
where $\ket{\phi_0}$ and $\ket{\phi_1}$ are arbitrary $n$-qubit states and $ a \in [0,1]$,
there exists a quantum algorithm that uses $O(A\log(1/\delta))$ applications of $U$ and $U^\dagger$ and $\tilde{O}(A\log(1/\delta))$ elementary gates, and outputs an estimator $\lambda$ such that, with probability $\geq 1-\delta$,
\begin{equation}
    |\lambda-\sqrt{a}| \leq \frac{1}{A}.
\end{equation}
\end{theorem}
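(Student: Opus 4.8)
The plan is to reduce amplitude estimation to quantum phase estimation applied to a Grover-type operator, following Brassard--H{\o}yer--Mosca--Tapp~\cite{BHMT00}. Write $\ket{\psi} := U\ket{0^n}\ket{0} = \sqrt{a}\,\ket{G} + \sqrt{1-a}\,\ket{B}$, where $\ket{G} := \ket{\phi_0}\ket{0}$ is the ``good'' part (last qubit $0$) and $\ket{B}:=\ket{\phi_1}\ket{1}$ the ``bad'' part, and fix $\theta\in[0,\pi/2]$ with $\sin\theta=\sqrt{a}$. Define the reflection $S_\chi := I - 2\Pi_G$, where $\Pi_G$ projects onto the subspace with last qubit $\ket{0}$ (a single-qubit phase flip controlled on the last qubit, $O(1)$ gates), and $S_0 := I - 2\ket{0^{n+1}}\bra{0^{n+1}}$ (a multiply-controlled phase flip, $\poly(n)$ gates). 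Put $Q := -\,U\,S_0\,U^{\dagger}\,S_\chi$. First I would verify the standard fact that $Q$ preserves the two-dimensional real subspace $V:=\mathrm{span}\{\ket{G},\ket{B}\}$ and acts on it as rotation by the angle $2\theta$; hence $Q|_V$ has eigenvalues $e^{\pm 2i\theta}$ with eigenvectors $\ket{\psi_\pm}$, and $\ket{\psi}\in V$ decomposes as $\ket{\psi} = \alpha_+\ket{\psi_+}+\alpha_-\ket{\psi_-}$ with $\lvert\alpha_+\rvert=\lvert\alpha_-\rvert=1/\sqrt{2}$.

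Next I would run textbook phase estimation on $Q$ with the eigenvalue register prepared in $\ket{\psi}$ (one application of $U$), using $t = O(\log A)$ control qubits; controlled-$Q^{2^j}$ for $j=0,\dots,t-1$ uses $2^j$ applications of $Q$, hence $\sum_{j=0}^{t-1} 2^j < 2^t = \Theta(A)$ applications of $Q$ in total, each $Q$ costing one $U$, one $U^\dagger$, and $\poly(n)$ extra gates, followed by an inverse QFT on $t$ qubits ($O(\log^2 A)$ gates). Because $\ket{\psi}$ is an equal-weight superposition of the two eigenvectors with phases $\pm 2\theta$, measuring the control register returns an integer $y$ for which $2\pi y / 2^t$ is, with probability bounded below by a constant $c>1/2$ (achieved by a constant number of extra control qubits in the usual way), within $O(1/A)$ of either $2\theta$ or $2\pi - 2\theta$. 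Setting $\tilde\theta := \pi y/2^t$ and outputting $\lambda := \lvert\sin\tilde\theta\rvert$, the bound $\lvert\sin x - \sin x'\rvert\le \lvert x - x'\rvert$ together with $\sin(\pi-\theta)=\sin\theta$ gives $\lvert\lambda - \sqrt{a}\rvert = O(1/A)$; choosing the hidden constant in $t$ large enough makes this $\le 1/A$. So far this is one ``run'' using $O(A)$ applications of $U,U^\dagger$ and $\widetilde{O}(A)$ elementary gates, succeeding with probability $\ge c > 1/2$.

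Finally I would amplify the success probability: perform $O(\log(1/\delta))$ independent runs and output the median of the resulting estimates. Since each run is within $1/A$ of $\sqrt a$ with probability $> 1/2$, a Hoeffding bound shows the median is within $1/A$ of $\sqrt a$ except with probability $\le \delta$. This multiplies all costs by $O(\log(1/\delta))$, yielding the claimed $O(A\log(1/\delta))$ applications of $U$ and $U^\dagger$ and $\widetilde{O}(A\log(1/\delta))$ elementary gates.

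The \emph{main obstacle} is the bookkeeping in the middle step: tracking the signs and phases in $Q = -US_0U^\dagger S_\chi$ to confirm it is rotation by exactly $2\theta$ on $V$; checking that $\ket\psi\in V$ splits evenly between the two eigenlines; and --- most delicately --- arguing that both admissible phase outputs, $2\theta$ and $2\pi-2\theta$, yield after $\lvert\sin(\cdot)\rvert$ a genuine estimate of $\sqrt a$ (not of $\sqrt{1-a}$ or a sign-flipped value), while simultaneously propagating the $O(1/A)$ phase-estimation error through $\sin$. Everything else --- the $\poly(n)$ reflection cost, the $O(\log^2 A)$ inverse-QFT cost, and the median amplification --- is routine.
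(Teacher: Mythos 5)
Your proposal is correct: it is the standard Brassard--H{\o}yer--Mosca--Tapp argument (phase estimation on the Grover operator $Q=-US_0U^\dagger S_\chi$, the $\sin$-Lipschitz error propagation handling the $2\theta$ versus $2\pi-2\theta$ ambiguity, and median amplification for the $\log(1/\delta)$ factor), which is exactly the source the paper cites. The paper itself gives no proof of this theorem --- it imports it as a black box from~\cite{BHMT00} --- so there is nothing to compare against beyond noting that your reconstruction matches the cited proof.
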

\vspace{0.2cm}

\begin{theorem}[State preparation~\cite{izdebski2020improved}]\label{thm:state_prep} Assume query access to the numbers $M_1, M_2, \ldots, M_m \in [0, 1]$ with an unknown sum $\sum_{i=1}^m M_i$ which has a known lower bound of $\epsilon m$. Then we can prepare the following quantum state
\begin{equation}
    \sum_{i=1}^m \sqrt{\frac{M_i}{|M|}}\ket{i},
\end{equation}
    using an expected number of $O(1/\sqrt{\epsilon})$ queries and $\tilde{O}(1/\sqrt{\epsilon})$ other operations. 
\end{theorem}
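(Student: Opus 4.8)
The plan is the standard ``build a sub-normalized version of the target state, then amplitude-amplify'' construction for preparing a state whose amplitudes encode given numbers, adapted to the fact that the normalizing constant $|M|=\sum_{i=1}^m M_i$ is unknown but bounded below by $\epsilon m$.

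\textbf{Step 1 (a sub-normalized encoding).} Starting from $\ket{0}\ket{0}$ on an index register of $\lceil\log m\rceil$ qubits together with a one-qubit flag, first create the uniform superposition $\frac{1}{\sqrt m}\sum_{i=1}^m\ket{i}\ket{0}$ with $\tilde O(1)$ gates. Then, using a single query to load (a sufficiently precise approximation of) $M_i$ into a scratch register, apply a controlled single-qubit rotation by angle $\arcsin\sqrt{M_i}$ on the flag, and uncompute the scratch. This yields
\begin{equation}
\frac{1}{\sqrt m}\sum_{i=1}^m\ket{i}\Big(\sqrt{M_i}\,\ket{0}+\sqrt{1-M_i}\,\ket{1}\Big).
\end{equation}
Call this unitary $U$; it uses one query and $\tilde O(1)$ other gates (the precision needed in the rotation only enters the suppressed polylog factors). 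The flag equals $\ket{0}$ with amplitude $\sqrt a$, where $a=\frac{1}{m}\sum_{i=1}^m M_i=|M|/m$, and conditioned on the flag being $\ket{0}$ the index register is in exactly the desired state $\frac{1}{\sqrt{|M|}}\sum_{i=1}^m\sqrt{M_i}\,\ket{i}$.

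\textbf{Step 2 (amplify the good branch).} By hypothesis $a=|M|/m\ge\epsilon$, so $1/\sqrt a\le 1/\sqrt\epsilon$. Since $a$ is not known exactly, we cannot fix the number of amplitude-amplification rounds in advance; instead we invoke the version of amplitude amplification that works with unknown success probability~\cite{BHMT00}, which runs amplitude amplification for a geometrically increasing, randomized number of Grover iterations and measures the flag, stopping on outcome $\ket{0}$. This uses an expected $O(1/\sqrt a)=O(1/\sqrt\epsilon)$ applications of $U$ and $U^\dagger$, hence $O(1/\sqrt\epsilon)$ queries, and $\tilde O(1/\sqrt\epsilon)$ other gates (each Grover iteration adds only $\tilde O(1)$ gates for the two reflections). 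On termination the index register is left \emph{exactly} in the state $\sum_{i=1}^m\sqrt{M_i/|M|}\,\ket{i}$, because conditioning on the flag outcome $\ket{0}$ is an exact projection onto the good subspace. Combining the two steps gives the claimed expected cost of $O(1/\sqrt\epsilon)$ queries and $\tilde O(1/\sqrt\epsilon)$ other operations.

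The only real subtlety—and the reason the ``expected'' qualifier is needed—is Step 2: we have merely a lower bound on $a$, so the unknown-$a$ form of amplitude amplification must be used in place of the exact-rotation version, and it is precisely the hypothesis $\epsilon m\le|M|$ (equivalently $a\ge\epsilon$) that converts its generic $O(1/\sqrt a)$ guarantee into the stated $O(1/\sqrt\epsilon)$ bound. Everything else (uniform-superposition preparation, query-and-uncompute for loading $M_i$, the controlled rotation, and the Grover reflections) is routine and contributes only $\tilde O(1)$ overhead per application of $U$.
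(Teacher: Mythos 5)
Your proposal is correct and follows exactly the construction used in the cited source (and mirrored in this paper's own proof of Theorem~\ref{thm:mean_estimate}): a controlled rotation conditioned on a queried value of $M_i$ produces a sub-normalized copy of the target state on the flag-$\ket{0}$ branch with amplitude $\sqrt{|M|/m}\geq\sqrt{\epsilon}$, and the unknown-success-probability version of amplitude amplification from~\cite{BHMT00} boosts this branch in an expected $O(1/\sqrt{\epsilon})$ rounds. You correctly identify the one genuine subtlety, namely that only a lower bound on the success amplitude is known, which is why the exponential-search variant and the ``expected'' qualifier are needed.
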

The same technique also allows us to prepare a quantum example for an $\epsilon$-smooth distribution $D=M/|M|$ on the training set, assuming we can query the entries of the measure~$M$. 

The next well-known theorem is a key ingredient to speeding up the computation of approximate Bregman projections, shown in Section~\ref{sec:overall_cost}. For completeness we include a proof in Appendix~\ref{app:mean_est}.

\begin{theorem}[Mean estimation~\cite{BHT98QCounting,AR20SimplifiedQCounting}]\label{thm:mean_estimate}
    Let $\epsilon,\delta\in(0,1)$ and $\zeta \in(0,0.5)$. Furthermore, let $x_1,\ldots,x_N \in [0,1]$ with (unknown) mean $\mu = \frac{1}{N}\sum_{i\in[N]} x_i$. Suppose we have quantum query access to $O_x:\ket{i}\ket{0}\rightarrow \ket{i}\ket{x_i}$ and we know that $\mu\geq \epsilon/2$. Then there exists a quantum algorithm that, with success probability $\geq 1-\delta$, estimates $\mu$ with multiplicative error $\zeta$ using $\bigOt{\frac{\log(1/\delta)}{\sqrt{\epsilon}\zeta}}$ queries to $O_x$ and its inverse, and similarly, $\bigOt{\frac{\log(1/\delta)}{\sqrt{\epsilon}\zeta}}$ other operations.
\end{theorem}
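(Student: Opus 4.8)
The plan is to reduce the problem to a single call of amplitude estimation (Theorem~\ref{thm:amplitude_estimation}), after first turning the real-valued inputs $x_i\in[0,1]$ into the amplitude of a flag qubit. Concretely, I would build a unitary $U$ acting on $\log N$ index qubits, a poly-sized ancilla register, and one flag qubit, as follows: starting from $\ket{0}\ket{0}\ket{0}$, apply Hadamards to the index register to obtain $\frac{1}{\sqrt{N}}\sum_{i}\ket{i}$ (padding $N$ up to a power of two if needed); use one query to $O_x$ to write $x_i$ into the ancilla register; apply a controlled rotation implementing $\ket{x_i}\ket{0}\mapsto\ket{x_i}\big(\sqrt{x_i}\ket{0}+\sqrt{1-x_i}\ket{1}\big)$ on the flag qubit; and finally uncompute $x_i$ with one more query to $O_x$. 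The resulting state is $\frac{1}{\sqrt{N}}\sum_{i}\ket{i}\big(\sqrt{x_i}\ket{0}+\sqrt{1-x_i}\ket{1}\big)\ket{0}$, so the probability that a measurement of the flag qubit returns $0$ is exactly $a:=\frac1N\sum_i x_i=\mu$. Hence $U$ fits the hypothesis of Theorem~\ref{thm:amplitude_estimation} with this value of~$a$, and each use of $U$ (or $U^\dagger$) costs $2$ queries to $O_x$ (resp.\ its inverse) plus $\tilde O(1)$ other gates.

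Next I would run amplitude estimation on $U$ with accuracy parameter $A:=\lceil c/(\zeta\sqrt{\epsilon})\rceil$ for a suitable absolute constant $c$ and failure probability $\delta$, obtaining $\lambda$ with $|\lambda-\sqrt{\mu}|\le 1/A$ with probability $\ge 1-\delta$, and output $\hat\mu:=\lambda^2$. For the error analysis, $|\hat\mu-\mu|=|\lambda-\sqrt{\mu}|\cdot|\lambda+\sqrt{\mu}|\le \frac1A\big(2\sqrt{\mu}+\frac1A\big)$. The assumption $\mu\ge\epsilon/2$ gives $\sqrt{\mu}\ge\sqrt{\epsilon/2}$, hence $1/A\le \zeta\sqrt{\epsilon}/c\le \sqrt2\,\zeta\sqrt{\mu}/c$, so (using $\zeta<1/2$) the bound is at most $\frac1A\cdot 3\sqrt{\mu}\le \frac{3\sqrt2}{c}\,\zeta\mu$, which is $\le\zeta\mu$ once $c\ge 3\sqrt2$. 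Thus $\hat\mu$ is a multiplicative-$\zeta$ estimate of $\mu$. By Theorem~\ref{thm:amplitude_estimation} this uses $O(A\log(1/\delta))=\bigOt{\frac{\log(1/\delta)}{\sqrt{\epsilon}\,\zeta}}$ applications of $U$ and $U^\dagger$, and therefore $\bigOt{\frac{\log(1/\delta)}{\sqrt{\epsilon}\,\zeta}}$ queries to $O_x$ and its inverse, and $\bigOt{\frac{\log(1/\delta)}{\sqrt{\epsilon}\,\zeta}}$ other operations, as claimed.

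The one point that needs genuine care (and is where I expect the only real friction) is the controlled rotation by the angle $\arccos\sqrt{x_i}$: it can be realized only to finite precision, so one must check that the induced amplitude error does not swamp the $O(1/A)$ accuracy that amplitude estimation needs. Since $1/A=\Theta(\zeta\sqrt{\epsilon})$, it suffices to compute the rotation angle to precision $\poly(\zeta\sqrt{\epsilon})$; this can be done with $\polylog(1/(\zeta\epsilon))$ additional gates and ancilla qubits (standard reversible arithmetic plus a $\polylog$-depth cascade of single-qubit rotations), and is absorbed into the $\tilde O(\cdot)$. This precision bookkeeping, together with the additive-to-multiplicative error conversion above (which is the only place the hypothesis $\mu\ge\epsilon/2$ enters), is all that goes beyond a direct invocation of Theorem~\ref{thm:amplitude_estimation}. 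As an alternative to folding $\log(1/\delta)$ into $A$, one could instead take the median of $O(\log(1/\delta))$ constant-confidence runs, but using the version of amplitude estimation stated in the excerpt avoids this detour.
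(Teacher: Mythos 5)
Your proposal is correct and follows essentially the same route as the paper's Appendix~\ref{app:mean_est} proof: encode $\mu$ as the squared amplitude of a flag qubit via one query to $O_x$ plus a controlled rotation, run amplitude estimation with accuracy $1/A=\Theta(\zeta\sqrt{\epsilon})$, square the result, and convert additive to multiplicative error using $\mu\ge\epsilon/2$. The only differences are cosmetic (you uncompute the $x_i$ register and are slightly more explicit about the finite-precision rotation and the choice of constant), so there is nothing to add.
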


\section{Smooth boosting}\label{sec:smooth_boost}
Recall that $X:= \{x_i : (x_i, y_i) \in S \}$ where $S = \{(x_i, y_i)\}_{i=1}^m$ is the training set we start from. At a high level, boosting typically requires the maintenance of an $m$-dimensional weight vector over the elements in $X$. Normalizing this weight vector appropriately gives a probability distribution~$D$. For smooth boosting techniques, this $D$ should not allow too much weight on any single entry, i.e., we would like $D$ to be $\epsilon$-smooth for some not-too-small~$\epsilon$. By Fact~\ref{fact:density}, one way to achieve that is to project the weight vector onto the high-density set~$\Gamma_\epsilon$.

In the original proposal of Servedio's SmoothBoost algorithm~\cite{servedio2003smooth}, the smoothness condition is checked separately by summing the $m$ components of the probability vector at every iteration, $t = 1, \ldots, T$, of the boosting procedure. Izdebski and de Wolf~\cite{izdebski2020improved} use quantum approximate counting~\cite{BHMT00} to approximate this sum more efficiently in their quantized version of Servedio's SmoothBoost. This sum, however, can be avoided completely by projecting onto the set of high-density measures, as discovered in~\cite{kale2007boosting, barak2009uniform}. We present Kale's algorithm in Algorithm~\ref{alg:kale_smoothboost}.

\begin{algorithm}[htb]
    \caption{Kale's SmoothBoost Algorithm}\label{alg:kale_smoothboost}
  \begin{algorithmic}[1]
    \Require Parameters $\gamma \in (0, 1/2)$ and $\epsilon \in [0,1]$; training set $S=\{(x_i, y_i)\}_{i=1}^m$; a $\gamma$-weak learner $\mathcal{W}$ with runtime $W$.
    \Statex
    \State Initialize $M^1 \in \Gamma_\epsilon$ as the uniform measure with weight $|M^1| = \epsilon m$. \vspace{0.1cm}
    \State \textbf{for $t = 1, \ldots, T$}: \vspace{0.1cm}
      \State \quad Feed $W$ examples generated according to the distribution $D^t = M^t/|M^t|$ to $\mathcal{W}$ to  
      \Statex \quad  obtain~$h_t$. Observe the associated loss vector $\ell^t \in \{0, 1\}^m$. \vspace{0.1cm}
      \State \quad Compute $M^{t+1}$ as the projection of $N^{t+1}= M^t(1-\gamma)^{\ell^t}$ onto the set $\Gamma_\epsilon$.
      \vspace{0.1cm}
    \State \Return The final hypothesis $H(x) = \text{MAJ}(h_1(x), \dots, h_T(x))$.\vspace{0.1cm}
  \end{algorithmic}
\end{algorithm}

Kale~\cite{kale2007boosting} proved that his SmoothBoost strategy outputs a hypothesis with low empirical error. The $01$-loss vector (indexed by the $m$ data points) at index $x$, for a particular $h$, is defined as 
\begin{equation}\label{eq:loss_def}
    \ell(x) = [h(x) = y].
\end{equation}
Somewhat unintuitively, this loss is high (i.e.,~1) when $x$ is correctly classified by $h$. This is due to the fact that the algorithm down-weights correctly-classified points, which only happens if they have high loss. The runtime costs incurred by Kale's approach involve: rejection sampling at step 3 to generate an example for the weak learner at a runtime cost of $W/\epsilon$ to sample from $D^t$, which is done $W$ times; updating the weight vector in step 4 using $O(m)$ runtime; and lastly, the projection in step 5 which would incur a runtime of $O(m)$ if computed exactly. 

Barak, Hardt and Kale~\cite{barak2009uniform} demonstrated that an \emph{approximation} to the projection in step~4 is still sufficient for correctness, and can be computed more efficiently by exploiting the implicit representation of measures outlined in Lemma~\ref{lem:implicit_rep}. We summarize their result in the following lemma. 

\begin{lemma}[Computing the approximate Bregman projection{~\cite[Lemma 3.2]{barak2009uniform}}]\label{lem:approx} Let $N$ be a measure with exact projection onto the set $\Gamma_\epsilon$ given by $M^* = \mathrm{min}(1, c \cdot N)$ where $c\in[1,1+\rho]$ is unknown to the algorithm but $\rho$ is known. Suppose further that we can compute entries of the measure $N$ and sample uniform elements from the domain $X$ in runtime $t$. Then, an implicit representation (in the form of a constant $\tilde{c}$) of a $\zeta \epsilon |X|$-approximation of $M^*$ can be computed in runtime
\begin{equation}\label{eq:proj_time}
O\bigg(\frac{t}{\epsilon \zeta^2}(\log \log \frac{\rho}{\zeta} + \log \frac{1}{\delta})\log \frac{\rho}{\zeta}\bigg)
\end{equation}
with success probability $\geq 1-\delta$.    
\end{lemma}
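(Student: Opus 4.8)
The plan is to reduce the lemma to a one-dimensional search problem and solve that problem by a noisy binary search. By Lemma~\ref{lem:implicit_rep} the exact projection is $M^*=\min(1,cN)$, so it suffices to produce a good estimate $\tilde c$ of the scalar $c\in[1,1+\rho]$; this $\tilde c$ is the desired implicit representation of $\widetilde M:=\min(1,\tilde c N)$. Define the ``density curve''
\[
g(c')=\mu\big(\min(1,c'N)\big)=\frac{1}{|X|}\sum_{x}\min\big(1,c'N(x)\big),
\]
which is continuous and non-decreasing in $c'$, with $g(c)=\epsilon$. Since the entries of $N$ lie in $[0,1]$ and $c\ge 1$, the projection setup forces $\mu(N)\le\epsilon$, so $0\le g'(c')\le\frac{1}{|X|}\sum_x N(x)=\mu(N)\le\epsilon$ for all $c'\ge1$; that is, $g$ is $\epsilon$-Lipschitz on $[1,\infty)$. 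The goal of the search is a value $\tilde c$ with $\tilde c\ge c$ and $g(\tilde c)\le(1+\zeta)\epsilon$.

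First I would show that any such $\tilde c$ already gives the approximation guarantee of Definition~\ref{defn_approx}. Indeed $\widetilde M=\min(1,\tilde c N)\in\Gamma_\epsilon$ because $g(\tilde c)\ge g(c)=\epsilon$. For the second condition, expanding both KL-divergences and cancelling the common $M(x)\log M(x)-M(x)$ terms gives, for every $M\in\Gamma_\epsilon$ with $\supp(M)\subseteq\supp(N)$ (the only case of interest, as otherwise $\KL(M\|M^*)=+\infty$),
\[
\KL(M\|\widetilde M)-\KL(M\|M^*)=\sum_x M(x)\log\frac{M^*(x)}{\widetilde M(x)}+\big(|\widetilde M|-|M^*|\big).
\]
Since $\tilde c\ge c$ we have $\widetilde M(x)\ge M^*(x)$ pointwise, so every $\log$-term is $\le 0$ and the first sum is $\le 0$; and $|\widetilde M|-|M^*|=\big(g(\tilde c)-g(c)\big)|X|=\big(g(\tilde c)-\epsilon\big)|X|\le\zeta\epsilon|X|$. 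Hence $\widetilde M$ is a $\zeta\epsilon|X|$-approximation of $M^*$.

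It remains to find such a $\tilde c$ within the claimed runtime, which I would do by a noisy binary search for $c$ over $[1,1+\rho]$. To evaluate $g$ at a point $c'$, draw $n=O\!\big(\tfrac{1}{\epsilon\zeta^2}\log(1/\delta')\big)$ uniform samples $x_1,\dots,x_n$ from $X$, compute each $N(x_j)$ and the bounded quantity $\min(1,c'N(x_j))\in[0,1]$, and output their average $\hat g(c')$; a multiplicative Chernoff bound gives $|\hat g(c')-g(c')|\le O(\zeta)\max\!\big(g(c'),\epsilon\big)$ with probability $\ge 1-\delta'$, i.e.\ multiplicative precision in the regime $g(c')\gtrsim\epsilon$ and a crude upper bound otherwise, which is all the search needs. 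Maintaining $\mathrm{lo}=1$, $\mathrm{hi}=1+\rho$, at each step set $\mathrm{mid}=(\mathrm{lo}+\mathrm{hi})/2$ and move $\mathrm{hi}\leftarrow\mathrm{mid}$ if $\hat g(\mathrm{mid})\ge(1+2\zeta)\epsilon$, else $\mathrm{lo}\leftarrow\mathrm{mid}$; conditioned on all estimates being accurate, one checks that the invariants ``$\mathrm{hi}\ge c$'' and ``$g(\mathrm{lo})<(1+O(\zeta))\epsilon$'' are preserved. After $O(\log(\rho/\zeta))$ steps the interval has width $\le\zeta$, so the output $\tilde c=\mathrm{hi}$ satisfies $c\le\tilde c$ and, by $\epsilon$-Lipschitzness, $g(\tilde c)\le g(\mathrm{lo})+\epsilon\zeta\le(1+O(\zeta))\epsilon$; rescaling $\zeta$ by a constant factor upfront gives exactly $g(\tilde c)\le(1+\zeta)\epsilon$. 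Taking $\delta'=\delta/O(\log(\rho/\zeta))$ and a union bound over the $O(\log(\rho/\zeta))$ steps yields overall success probability $\ge 1-\delta$; the total cost is $O(\log(\rho/\zeta))$ steps, each doing $n=O\!\big(\tfrac{1}{\epsilon\zeta^2}(\log\log\tfrac{\rho}{\zeta}+\log\tfrac{1}{\delta})\big)$ samples at cost $t$ apiece, i.e.\ $O\!\big(\tfrac{t}{\epsilon\zeta^2}(\log\log\tfrac{\rho}{\zeta}+\log\tfrac1\delta)\log\tfrac{\rho}{\zeta}\big)$, matching~\eqref{eq:proj_time}.

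The main obstacle is the noisy binary search itself: because $g$ may be arbitrarily steep at $c$, we cannot localize $c$ to within an additive $O(\zeta)$ via its $g$-values alone, so the delicate point is choosing the decision threshold $(1+2\zeta)\epsilon$ and the two invariants so that the returned endpoint ends up on the correct side of $c$ \emph{and} has $g$-value overshooting $\epsilon$ by at most $O(\zeta\epsilon)$ --- and this last step is exactly where the $\epsilon$-Lipschitz bound on $g$ is used. A secondary point worth getting right is the sample count: the $1/(\epsilon\zeta^2)$ (rather than $1/(\epsilon^2\zeta^2)$) comes from using a \emph{multiplicative} Chernoff bound together with the observation that fine additive precision $O(\zeta\epsilon)$ is only needed when $g(c')=\Theta(\epsilon)$, where it coincides with multiplicative precision $O(\zeta)$.
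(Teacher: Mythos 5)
Your proposal is correct and follows essentially the same route as the paper, which itself only sketches this lemma by deferring to Barak--Hardt--Kale (Lemma 3.2): binary search for a $\tilde{c}$ with $\epsilon \leq \mu(\min(1,\tilde{c}N)) \leq (1+\zeta)\epsilon$, using sampling-based density estimation at each step. Your reconstruction supplies the details the paper omits --- the KL computation showing such a $\tilde{c}$ suffices, the $\epsilon$-Lipschitz bound on the density curve (valid since the hypotheses force $\mu(N)\leq\epsilon$), and the multiplicative Chernoff accounting giving $1/(\epsilon\zeta^2)$ samples per step --- and these all check out.
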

The proof \cite[Lemma 3.2]{barak2009uniform} uses binary search to find a constant $\tilde{c} \in [1, 1+\rho]$ such that the measure $\widetilde{M}:=\mathrm{min}(1, \tilde{c}\cdot N)$ satisfies
\begin{equation}
\epsilon \leq \mu(\widetilde{M}) \leq (1+\zeta)\epsilon.
\end{equation}
Having such a $\tilde{c}$ suffices to represent the desired approximate projection implicitly.

\section{QuantumBoost}\label{sec:quantum_boost}
Now that we have stated all the necessary technical ingredients, we present QuantumBoost in Algorithm~\ref{alg:lqb}. QuantumBoost can be thought of as a quantized version of Kale's SmoothBoost algorithm (specifically the version with approximate Bregman projections~\cite{barak2009uniform}), coupled with a lazy projection strategy. Our algorithm performs the usual multiplicative updates at every iteration, but only enforces the high-density constraint (approximate projection onto $\Gamma_\epsilon$) once every $K$ iterations. 
In the subsequent analysis, we show that  $T=O(\log(1/\epsilon)/\gamma^2)$ iterations still suffice.

\begin{algorithm}[hbt]
\caption{QuantumBoost Algorithm}
\label{alg:lqb}
\begin{algorithmic}[1]
\Require Parameters $\gamma \in (0, 1/2)$, $\epsilon \in (0, 1)$; Training set $S = \{(x_i, y_i)_{i=1}^m$; $\gamma$-weak quantum learner $\mathcal{W}$ with runtime $W$.
\Statex
\State Initialize $M^1 \in \Gamma_\epsilon$ as the uniform measure with weight $|M^1| = \epsilon m$.\\ Set the projection interval $K = 1/\gamma$ and the approximate- projection precision as $\zeta = \gamma/4$.\vspace{0.1cm}
\For{$t = 1, \dots, T$} \vspace{0.1cm}
    \State Prepare $W$ copies of the quantum state $|D^t\rangle$ corresponding to the normalized distri- 
    \Statex \quad\quad bution $D^t = M^t/|M^t|$ and feed $|D^t\rangle^{\otimes W}$ to $\mathcal{W}$ to obtain hypothesis $h_t$ (we can compute \Statex\quad \quad the entries of the corresponding loss vector~$\ell^t$ ourselves from this). 
    \Statex\quad \quad Store (the name of) $h_t$ in classical memory.
    \vspace{0.1cm}
    \vspace{0.1cm}
    \If{$t \pmod K = 0$} \vspace{0.1cm}
        \State Compute $M^{t+1}$ as an implicit representation of the $\zeta\epsilon m$-approximate Bregman 
        \Statex \quad \quad \quad  projection of $N^{t+1}=M^t(1-\gamma)^{\ell^t}$ onto $\Gamma_\epsilon$ using the subroutine of Theorem~\ref{thm:quantum_approx}. \Statex \quad  \quad \quad Store the corresponding constant~$\tilde{c}_t$ in classical memory.
    \Else \vspace{0.1cm}
        \State Set $M^{t+1} = N^{t+1} = M^t(1-\gamma)^{\ell^t}$ (don't explicitly update an $m$-dimensional vector). \vspace{0.1cm}
    \EndIf
\EndFor
\State \Return The final hypothesis $H(x) = \text{MAJ}(h_1(x), \dots, h_T(x))$.
\end{algorithmic}
\end{algorithm}

In contrast to classical boosters, QuantumBoost does not explicitly keep track of the $m$-dimensional weight vector $M^t$, but rather stores, after each iteration, the (name of the) weak hypothesis $h_t$ that was generated in that iteration, as well as the constant $\tilde{c}_t$ used to implicitly represent the approximate Bregman projection when such a projection is done. This information is stored in classical memory, and enables a quantum circuit to compute entries $M^t(x)$ on the fly with runtime $O(t)$. In each of the $T$ iterations, preparing copies of a quantum state for the weak learner, running the weak learner, and computing the approximate projection of an updated measure $N^{t+1} = M^t(1-\gamma)^{\ell^t}$ at every $K^\text{th}$ iteration, are the main contributors to the overall runtime, which we analyze in Section~\ref{sec:overall_cost}.

\subsection{Error bound for QuantumBoost}\label{sec:loss_bound}

In order to prove that QuantumBoost outputs a hypothesis with low empirical error, we analyze the algorithm in two parts: the progress made by the usual multiplicative update rule and the accumulation of errors from the approximate projection at every $K^\text{th}$ iteration. To do this, we first state the approximate projection error in terms of relative entropy by using the KL-RE identity outlined in Fact~\ref{identity:KLRE}.

\begin{lemma}[Approximate Bregman Projection and Relative Entropy]\label{lem:RE_approx} Let $M_E$ be any measure with weight $|M_E| = \epsilon m$, and  $D_E = M_E/|M_E|$ be its associated distribution. Let $M^{t+1}$ be a measure that is a $\zeta\epsilon m$-approximation  of $N^{t+1}=M^t(1-\gamma)^{\ell^t}$, $D^{t+1}=M^{t+1}/|M^{t+1}|$ the associated distribution of $M^{t+1}$ and $\hat{D}^{t+1}=N^{t+1}/|N^{t+1}|$ the associated distribution of $N^{t+1}$. Then, 
\begin{equation*}
    \KL(M_E || M^{t+1}) - \KL(M_E||M^*) \leq \zeta \epsilon m
\end{equation*}
    implies 
    \begin{equation*}
        \RE(D_E||D^{t+1}) - \RE(D_E||D^*) \leq \zeta
    \end{equation*}
    and
    \begin{equation*}        \RE(D_E||D^{t+1}) - \RE(D_E||\hat{D}^{t+1}) \leq \zeta
    \end{equation*}
    where $M^*$ is the exact projection of $N^{t+1}$, with associated distribution $D^*=M^*/|M^*|$.
\end{lemma}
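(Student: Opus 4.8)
The statement has a clean structure: a KL-divergence inequality between measures implies two relative-entropy inequalities between the corresponding normalized distributions. The natural tool is the KL-RE identity of Fact~\ref{identity:KLRE}. The plan is to expand each KL-divergence appearing in the hypothesis using that identity, and then show that the weight-dependent correction terms telescope or cancel, leaving only the relative-entropy terms plus controllable slack.

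\textbf{First step: set up the weights.} I would begin by recording the weights of the four measures $M_E$, $M^{t+1}$, $M^*$, $N^{t+1}$. By assumption $|M_E| = \epsilon m$. Since $M^{t+1}$ is a $\zeta\epsilon m$-approximation, it lies in $\Gamma_\epsilon$, so $\mu(M^{t+1}) \geq \epsilon$, i.e.\ $|M^{t+1}| \geq \epsilon m$; similarly $|M^*| \geq \epsilon m$ because $M^* = P_\epsilon(N^{t+1}) \in \Gamma_\epsilon$. On the other hand, since $N^{t+1} = M^t(1-\gamma)^{\ell^t}$ with $M^t \in \Gamma_\epsilon$ (inductively) and $0 < 1-\gamma < 1$, we need to understand $|N^{t+1}|$ — it could be below $\epsilon m$, which is exactly why the projection is needed. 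The key additional fact I would use is Lemma~\ref{lem:implicit_rep}/Lemma~\ref{lem:approx}: the exact projection $M^* = \min(1, c\cdot N^{t+1})$ has density \emph{exactly} $\epsilon$ (it's the smallest such $c$), so $|M^*| = \epsilon m$, and the approximate projection satisfies $\epsilon \leq \mu(M^{t+1}) \leq (1+\zeta)\epsilon$, i.e.\ $\epsilon m \leq |M^{t+1}| \leq (1+\zeta)\epsilon m$.

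\textbf{Second step: expand and cancel.} Apply Fact~\ref{identity:KLRE} to $\KL(M_E\|M^{t+1})$ with $A = M_E$, $B = M^{t+1}$, and to $\KL(M_E\|M^*)$ with $A = M_E$, $B = M^*$. Since $|A| = |M_E| = \epsilon m$ in both, the terms $|A|\RE(D_E\|\cdot)$ carry a common factor $\epsilon m$, and $|M^*| = \epsilon m$ gives $\KL(M_E\|M^*) = \epsilon m\,\RE(D_E\|D^*)$ exactly (the $\log(|A|/|B|)$ and $|B|-|A|$ terms vanish). For $\KL(M_E\|M^{t+1})$ the correction is $\epsilon m\log(\epsilon m/|M^{t+1}|) + |M^{t+1}| - \epsilon m$, which since $\epsilon m \leq |M^{t+1}| \leq (1+\zeta)\epsilon m$ is bounded in absolute value by $O(\zeta\epsilon m)$ (using $\log(1+x) \leq x$ and $|M^{t+1}| - \epsilon m \leq \zeta\epsilon m$; the log term is negative so it only helps one direction). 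Subtracting, the hypothesis $\KL(M_E\|M^{t+1}) - \KL(M_E\|M^*) \leq \zeta\epsilon m$ becomes $\epsilon m(\RE(D_E\|D^{t+1}) - \RE(D_E\|D^*)) + (\text{correction}) \leq \zeta\epsilon m$, and dividing by $\epsilon m$ yields $\RE(D_E\|D^{t+1}) - \RE(D_E\|D^*) \leq \zeta + (\text{correction}/\epsilon m)$. Here I would need to be slightly careful with constants — the stated bound is exactly $\zeta$, so either the lemma is using $\zeta$ loosely as a stand-in for $O(\zeta)$, or one should track that $\KL(M_E\|M^{t+1}) \geq \epsilon m\,\RE(D_E\|D^{t+1}) + \epsilon m\log(\epsilon m/|M^{t+1}|)$ and the negative log term can be used to absorb the positive $|M^{t+1}|-\epsilon m$ term. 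In fact $x\log(1/(1+y)) + y \geq 0$ is false in general, so the cleanest route is: the correction equals $\epsilon m\log(\epsilon m / |M^{t+1}|) + (|M^{t+1}| - \epsilon m) \le |M^{t+1}| - \epsilon m$, and this might push the bound to $2\zeta$ rather than $\zeta$ — I would resolve this by noting $D^{t+1}$ is what the algorithm actually uses, checking whether the intended reading allows an $O(\zeta)$ slack, and adjusting the downstream choice $\zeta = \gamma/4$ accordingly if a constant is lost.

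\textbf{Third step: the $\hat D^{t+1}$ inequality.} For the second conclusion I need to compare $D^{t+1}$ to $\hat D^{t+1} = N^{t+1}/|N^{t+1}|$. Since $M^{t+1} = \min(1, \tilde c\, N^{t+1})$ with $\tilde c \geq 1$, on the support the two measures differ only by clipping at $1$, so $D^{t+1}$ and $\hat D^{t+1}$ have the \emph{same} support and are related by a simple rescaling-with-truncation; one can either invoke Bregman's theorem (Theorem~\ref{thm:bregman}) to get $\RE$ monotonicity, or more directly bound $\RE(D_E\|D^{t+1}) - \RE(D_E\|\hat D^{t+1}) = \sum_x D_E(x)\log(\hat D^{t+1}(x)/D^{t+1}(x)) = \E_{D_E}[\log(|M^{t+1}| / (\tilde c |N^{t+1}|))]$ on the unclipped part, plus a correction on the clipped part. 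Since $M^{t+1} \leq \tilde c N^{t+1}$ pointwise we get $|M^{t+1}| \leq \tilde c |N^{t+1}|$, so $\log(|M^{t+1}|/(\tilde c|N^{t+1}|)) \leq 0$ on the unclipped coordinates; the clipped coordinates need a small separate argument. Combining with the first conclusion — or running the same expansion of Fact~\ref{identity:KLRE} against $\KL(M_E\|\hat D^{t+1}\cdot|\hat\cdot|)$, noting $\KL(M_E\|N^{t+1}) \geq \KL(M_E\|M^{t+1}) + \KL(M^{t+1}\|N^{t+1}) \geq \KL(M_E\|M^{t+1})$ is the wrong direction, so I'd instead use $\KL(M_E\|M^*) \le \KL(M_E\|N^{t+1})$ from Bregman — gives the result.

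\textbf{Main obstacle.} The real subtlety, and where I'd spend the most care, is the bookkeeping of the non-relative-entropy terms in Fact~\ref{identity:KLRE}: $|A|\log(|A|/|B|)$ and $|B|-|A|$. These are exactly the terms that distinguish ``measure KL'' from ``distribution RE,'' and getting the inequality to come out with clean constant $\zeta$ (rather than $O(\zeta)$) requires pinning down $|M^*| = \epsilon m$ precisely and using the one-sided sign of $\log(\epsilon m/|M^{t+1}|) \le 0$ to cancel against $|M^{t+1}| - \epsilon m > 0$ — but since $x\ln x$ convexity gives $(|M^{t+1}|-\epsilon m) + \epsilon m \ln(\epsilon m/|M^{t+1}|) \ge 0$ for all positive values (this is the statement that $\KL \ge 0$ between the \emph{constant} measures $\epsilon m$ and $|M^{t+1}|$), the correction term is actually \emph{nonnegative}, which means it works \emph{against} us and the honest bound is $\zeta + (|M^{t+1}| - \epsilon m)/(\epsilon m) \le 2\zeta$. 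I would flag this and either (a) absorb the factor $2$ by redefining $\zeta$, (b) check the paper tolerates $O(\zeta)$ here, or (c) see whether a tighter choice — projecting to density exactly $\epsilon$ rather than up to $(1+\zeta)\epsilon$ — is what's intended, in which case $|M^{t+1}| = \epsilon m$ and the correction vanishes entirely.
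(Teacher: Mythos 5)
Your overall skeleton is the same as the paper's: expand every KL term with Fact~\ref{identity:KLRE}, exploit $|M_E|=|M^*|=\epsilon m$ (correct, by Lemma~\ref{lem:implicit_rep} the exact projection has density exactly $\epsilon$), and invoke Bregman's theorem for the second conclusion. However, your ``main obstacle'' is a phantom caused by a sign error. Writing $\KL(M_E\|M^{t+1})=\epsilon m\,\RE(D_E\|D^{t+1})+f$ with $f=\epsilon m\log(\epsilon m/|M^{t+1}|)+|M^{t+1}|-\epsilon m$, you correctly observe that $f\ge 0$ (it is a generalized KL between the constants $\epsilon m$ and $|M^{t+1}|$). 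But $f$ sits on the \emph{left-hand side} of the hypothesis: substituting gives $\epsilon m\,\RE(D_E\|D^{t+1})+f\le \epsilon m\,\RE(D_E\|D^*)+\zeta\epsilon m$, so when you isolate the RE difference you get $\RE(D_E\|D^{t+1})-\RE(D_E\|D^*)\le \zeta-f/(\epsilon m)\le\zeta$. The nonnegativity of $f$ works \emph{for} you, not against you; the clean constant $\zeta$ comes out directly, no factor $2$ is lost, and none of your proposed repairs (redefining $\zeta$, tolerating $O(\zeta)$, forcing $|M^{t+1}|=\epsilon m$) is needed. Your statement ``dividing by $\epsilon m$ yields $\le\zeta+(\text{correction}/\epsilon m)$'' flips the sign of the correction when moving it across the inequality.

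The second conclusion has a genuine gap as you left it. The truncated Bregman inequality $\KL(M_E\|M^*)\le\KL(M_E\|N^{t+1})$ that you settle on at the end is not sufficient: expanding the right-hand side gives $\KL(M_E\|N^{t+1})=\epsilon m\,\RE(D_E\|\hat{D}^{t+1})+g$ with $g=\epsilon m\log(\epsilon m/|N^{t+1}|)+|N^{t+1}|-\epsilon m\ge 0$, and this time the nonnegative correction lands on the \emph{right}, so you only obtain $\RE(D_E\|D^*)\le\RE(D_E\|\hat{D}^{t+1})+g/(\epsilon m)$, and $g$ is not small in general ($|N^{t+1}|$ need not be close to $\epsilon m$). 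The fix, which is what the paper does, is to keep the full three-term inequality $\KL(M_E\|N^{t+1})\ge\KL(M_E\|M^*)+\KL(M^*\|N^{t+1})$: because $|M^*|=|M_E|=\epsilon m$, the correction term appearing in $\KL(M^*\|N^{t+1})$ is exactly the same $g$, so the two cancel and one gets $\RE(D_E\|\hat{D}^{t+1})\ge\RE(D_E\|D^*)+\RE(D^*\|\hat{D}^{t+1})\ge\RE(D_E\|D^*)$, which combined with the first conclusion gives the second. Your alternative pointwise clipping/rescaling argument is both incomplete (the clipped coordinates are never handled) and unnecessary.
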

Essentially, Lemma~\ref{lem:RE_approx} upper bounds the deviation in relative entropy (w.r.t.\ some reference measure $M_E$) between the distributions before and after the approximate projection. We defer the proof to Appendix~\ref{app:re_projs} for better readability. With this lemma at hand, we can derive a general regret (loss) bound for QuantumBoost.

\begin{theorem}[Generalized Regret Bound for QuantumBoost]\label{thm:regret_lqb} 
Let QuantumBoost run for $T$ iterations with parameter $\gamma \in (0, 1/2)$, projection interval $K$, and precision $\zeta$. For every sequence of Boolean loss vectors $\ell^1, \dots, \ell^T$, and every distribution $D \in \mathcal{P}_\epsilon$, the following regret bound holds:
$$ \sum_{t=1}^T \langle D^t, \ell^t \rangle \le (1+\gamma) \sum_{t=1}^T \langle D, \ell^t \rangle + \frac{R\zeta}{\gamma} + \frac{\RE(D||D^1)}{\gamma} $$
where $R = \lceil T/K \rceil$ is the total number of approximate Bregman projections that the algorithm makes.
\end{theorem}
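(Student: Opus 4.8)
The plan is to run the standard multiplicative-weights / relative-entropy potential argument, but with the potential $\RE(D\|D^t)$ as the running ``distance'' from the comparator $D$, and to carefully bookkeep the extra error terms that appear only at the iterations where we project. So I would fix an arbitrary $D\in\mathcal P_\epsilon$ (equivalently, think of it as the normalization of a measure $M_E$ with $|M_E|=\epsilon m$, so Lemma~\ref{lem:RE_approx} applies) and study the one-step change $\RE(D\|D^{t})-\RE(D\|D^{t+1})$. There are two cases. In a \emph{non-projection} step we have $M^{t+1}=N^{t+1}=M^t(1-\gamma)^{\ell^t}$, so at the level of \emph{distributions} $D^{t+1}=\hat D^{t+1}$ is just the multiplicative-weights update of $D^t$ with losses $\ell^t$ and learning rate tied to $\gamma$; the classical MW analysis gives
\[
\RE(D\|D^{t})-\RE(D\|D^{t+1})\ \ge\ (\ln\tfrac{1}{1-\gamma})\langle D^t,\ell^t\rangle-(\ln\tfrac{1}{1-\gamma})\,(1+\gamma')\langle D,\ell^t\rangle
\]
for a suitable constant, i.e.\ after dividing by $\ln\frac{1}{1-\gamma}\approx\gamma$ we get a per-step inequality of the form $\langle D^t,\ell^t\rangle\le(1+\gamma)\langle D,\ell^t\rangle+\frac{1}{\gamma}\big(\RE(D\|D^t)-\RE(D\|D^{t+1})\big)$. (The exact constants and the $\ln\frac{1}{1-\gamma}$ vs.\ $\gamma$ bookkeeping are routine; one uses $\ell^t\in\{0,1\}^m$ and the elementary inequality $(1-\gamma)^{x}\le 1-(\ln\frac{1}{1-\gamma})x+\dots$ on $[0,1]$, or the even simpler $(1-\gamma)^x\le 1-\gamma x$ convexity bound, to turn the log-partition-function term into $\langle D^t,\ell^t\rangle$, and on the comparator side one lower bounds $\langle D,\log\frac{D^t}{D^{t+1}}\rangle$ similarly.)

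In a \emph{projection} step, $M^{t+1}$ is the $\zeta\epsilon m$-approximate Bregman projection of $N^{t+1}$, so there are really two sub-moves: the MW update $D^t\to\hat D^{t+1}$ (handled exactly as above, giving the same per-step loss inequality but with $D^{t+1}$ replaced by $\hat D^{t+1}$), followed by the projection $\hat D^{t+1}\to D^{t+1}$. For the projection sub-move I would invoke Lemma~\ref{lem:RE_approx}: since $M^{t+1}$ is a $\zeta\epsilon m$-approximation of $N^{t+1}$, the lemma gives $\RE(D\|D^{t+1})-\RE(D\|\hat D^{t+1})\le\zeta$, i.e.\ the projection increases the potential by at most $\zeta$. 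Chaining the two sub-moves, a projection step satisfies
\[
\langle D^t,\ell^t\rangle\ \le\ (1+\gamma)\langle D,\ell^t\rangle+\tfrac{1}{\gamma}\big(\RE(D\|D^t)-\RE(D\|D^{t+1})\big)+\tfrac{\zeta}{\gamma}.
\]
(Here I should double-check that Lemma~\ref{lem:RE_approx} is stated with the right reference measure: its hypothesis $\KL(M_E\|M^{t+1})-\KL(M_E\|M^*)\le\zeta\epsilon m$ is exactly property~2 of Definition~\ref{defn_approx} applied to $M=M_E\in\Gamma_\epsilon$, which holds because $D\in\mathcal P_\epsilon$ forces $\mu(M_E)\ge\epsilon$, so $M_E\in\Gamma_\epsilon$. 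This is the one place where the smoothness hypothesis $D\in\mathcal P_\epsilon$ is actually used.)

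Finally I would sum the per-step inequality over $t=1,\dots,T$. The potential terms telescope to $\frac{1}{\gamma}\big(\RE(D\|D^1)-\RE(D\|D^{T+1})\big)\le\frac{1}{\gamma}\RE(D\|D^1)$ since relative entropy is nonnegative. The $\frac{\zeta}{\gamma}$ error term appears only in projection steps, of which there are exactly $R=\lceil T/K\rceil$, contributing $\frac{R\zeta}{\gamma}$. Altogether,
\[
\sum_{t=1}^T\langle D^t,\ell^t\rangle\ \le\ (1+\gamma)\sum_{t=1}^T\langle D,\ell^t\rangle+\frac{R\zeta}{\gamma}+\frac{\RE(D\|D^1)}{\gamma},
\]
which is the claimed bound. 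The main obstacle I anticipate is not the telescoping but getting the constants exactly right in the MW step so that the comparator coefficient is precisely $1+\gamma$ (and not, say, $\frac{1}{1-\gamma}$ or $1+O(\gamma)$ with a hidden constant): this requires being slightly careful about whether one writes the update as $(1-\gamma)^{\ell^t}$ and uses $\ln\frac{1}{1-\gamma}$, or linearizes, and about the direction of the inequalities on the two sides of the relative-entropy identity. A secondary subtlety is making sure the ``two sub-moves'' decomposition in a projection step is legitimate — i.e.\ that we may treat $\hat D^{t+1}$ as an intermediate distribution even though the algorithm never forms it explicitly — but since $\hat D^{t+1}=N^{t+1}/|N^{t+1}|$ is a well-defined distribution this is purely an analysis device and causes no trouble.
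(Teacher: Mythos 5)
Your proposal is correct and follows essentially the same route as the paper's proof: the same potential $\RE(D\|D^t)$, the same decomposition of each step into a multiplicative update followed by a (possibly trivial) projection through the intermediate distribution $\hat D^{t+1}$, the same invocation of Lemma~\ref{lem:RE_approx} for the $\zeta$ cost of each of the $R$ projections, and the same telescoping plus nonnegativity of relative entropy. The constant you flagged as the main obstacle is resolved exactly as you guessed: since $\ell^t$ is Boolean, $Z_t=1-\gamma\langle D^t,\ell^t\rangle$ exactly, so $\log Z_t\le-\gamma\langle D^t,\ell^t\rangle$, and the comparator coefficient $1+\gamma$ comes from $-\log(1-\gamma)\le\gamma(1+\gamma)$, valid for $\gamma\le 1/2$.
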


\begin{proof}
Fix a distribution $D \in \mathcal{P}_\epsilon$. We consider the potential function $\Psi^t(D) = \RE(D||D^t)$, and analyze its change $\Delta \Psi^t(D) = \Psi^{t+1}(D) - \Psi^t(D)$ in every iteration. We decompose the change into the update phase and the projection phase, using the intermediate distribution $\hat{D}^{t+1}$ (which corresponds to $N^{t+1}$ normalized, so after the update but before the projection):
\[
\Delta \Psi^t(D) = \underbrace{\left[\RE(D || \hat{D}^{t+1}) - \RE(D || D^t)\right]}_{\Delta \Psi^t_{Update}(D)} + \underbrace{\left[\RE(D || D^{t+1}) - \RE(D || \hat{D}^{t+1})\right]}_{\Delta \Psi^t_{Proj}(D)} 
\]
We separately upper bound the two terms on the right-hand side, starting with $\Delta \Psi^t_{Update}(D)$.
    The transition from $D^t$ to $\hat{D}^{t+1}$ follows the multiplicative update rule: $\hat{D}^{t+1}(x) =\frac{1}{Z_t} D^t(x)(1-\gamma)^{\ell^t(x)}$ where $Z_t=\sum\limits_x D^t(x)(1-\gamma)^{\ell^t(x)}=\sum\limits_x D^t(x)(1-\gamma \ell^t(x))=1 - \gamma \langle D^t, \ell^t\rangle$ is the normalization factor (we used the fact that the entries of the loss vector are Boolean, so $(1-\gamma)^{ \ell^t(x)}=1-\gamma \ell^t(x)$). Since
\begin{equation*}
\RE(D || \hat{D}^{t+1}) - \RE(D || D^t)=\sum\limits_x D(x)\log \frac{D^t(x)}{\hat{D}^{t+1}(x)}=\sum\limits_x D(x)\log \frac{Z_t}{(1-\gamma)^{\ell^t(x)}},
\end{equation*}
we have
\begin{equation*}
\Delta \Psi^t_{Update}(D) = \log(Z_t) - \langle D, \ell^t \rangle \log(1-\gamma). 
\end{equation*}
Using the inequalities $\log(Z_t) \le -\gamma \langle D^t, \ell^t \rangle$ and $-\log(1-\gamma) \le \gamma(1+\gamma)$ (valid for $\gamma \le 1/2$):
\begin{equation}\label{eq:upderrorUP}
    \Delta \Psi^t_{Update}(D) \le \gamma \left( \langle D, \ell^t \rangle(1+\gamma) - \langle D^t, \ell^t \rangle \right) 
\end{equation}
Next, we bound the approximate projection error $\Delta \Psi^t_{Proj}(D)$. If no projection occurs (which is actually the case in most iterations), then $D^{t+1}=\hat{D}^{t+1}$ and hence $\Delta \Psi^t_{Proj}(D)=0$. If a projection occurs, we may bound the increase in RE potential using Lemma~\ref{lem:RE_approx}. Note that for an arbitrary $D \in \mathcal{P}_\epsilon$, we can define a corresponding measure $M_D(x) = D(x)\epsilon m$. Since $\|D\|_\infty \le 1/(\epsilon m)$, we have $M_D(x) \le 1$. The total weight is $|M_D| = \sum_x D(x)\epsilon m = \epsilon m$.
Thus, $M_D \in \Gamma_\epsilon$ and Lemma~\ref{lem:RE_approx} applies:

\begin{equation}\label{eq:projerrorUP}
\Delta \Psi^t_{Proj}(D) \le \zeta.    
\end{equation} We now sum the (upper bounds on the) changes in the potential over all $T$ iterations:
\begin{align*}
    \Psi^{T+1}(D) - \Psi^1(D) &= \sum_{t=1}^T \Delta \Psi^t_{Update}(D) + \sum_{t=1}^T \Delta \Psi^t_{Proj}(D) \\
    &\le \sum_{t=1}^T \gamma \left( \langle D, \ell^t \rangle(1+\gamma) - \langle D^t, \ell^t \rangle \right) + R\zeta
\end{align*}
Rearranging the terms to isolate the algorithm's overall loss $\sum_t \langle D^t, \ell^t \rangle$, and using the fact that relative entropy is non-negative ($\Psi^{T+1}(D) \ge 0$) and
\begin{align*}
    \gamma \sum_{t=1}^T \langle D^t, \ell^t \rangle &\le \gamma(1+\gamma) \sum_{t=1}^T \langle D, \ell^t \rangle + R\zeta + \Psi^1(D) - \Psi^{T+1}(D) \\
    &\le \gamma(1+\gamma) \sum_{t=1}^T \langle D, \ell^t \rangle + R\zeta + \RE(D||D^1)
\end{align*}
Dividing by $\gamma$ proves the theorem.
\end{proof}

Using the regret bound, we now prove the error guarantees achieved by QuantumBoost.

\begin{theorem}[Empirical error bound for QuantumBoost]\label{thm:error_bound}  Given access to a $\gamma$-weak learner for the concept class $\mathcal{C}$ with hypothesis class $\mathcal{H}$ and a training set $S = \{(x_i, y_i)\}_{i=1}^m$, QuantumBoost outputs a hypothesis $H$ with empirical error
\begin{equation}
 \widehat{\mathrm{err}}(H)=\mathrm{Pr}_{i \sim [m]}[H(x_i) \neq y_i] < \epsilon
\end{equation}
It uses $T= O(\log(1/\epsilon)/\gamma^2)$ iterations, with one call to the weak learner per iteration.
\end{theorem}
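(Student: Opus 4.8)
The plan is to combine the generalized regret bound of Theorem~\ref{thm:regret_lqb} with the weak learner's guarantee and a careful choice of parameters $T$, $K$, $\zeta$. The starting observation is that the $01$-loss vector $\ell^t(x) = [h_t(x) = y]$ is \emph{large} on correctly-classified points, so the $\gamma$-weak learner's guarantee $\Pr_{x\sim D^t}[h_t(x)=y_i]\ge 1/2+\gamma$ translates to $\langle D^t, \ell^t\rangle \ge 1/2 + \gamma$ for every $t$. Summing, the left side of the regret bound is at least $T(1/2+\gamma)$. For the right side, I would plug in the parameters chosen in Algorithm~\ref{alg:lqb}, namely $K = 1/\gamma$ (so $R = \lceil T/K\rceil = O(T\gamma)$) and $\zeta = \gamma/4$, which makes the accumulated-projection term $R\zeta/\gamma = O(T\gamma \cdot \gamma/4)/\gamma = O(T\gamma)$ — crucially this scales like $T$ times a small constant multiple of $\gamma$, not worse. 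The $\RE(D\|D^1)/\gamma$ term is bounded using that $D^1$ is uniform on $X$, so $\RE(D\|D^1) = \sum_x D(x)\log(|X| D(x)) \le \log|X| \le \log(1/\epsilon') $ after normalization... more precisely $\RE(D\|D^1)\le \log m$, and for the witness distribution we will take it to be (close to) uniform on the error set, giving $\RE \le \log(1/\widehat{\mathrm{err}}(H))$.

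The core of the argument is the standard ``if $H$ errs on more than an $\epsilon$-fraction, derive a contradiction'' step. Suppose $\widehat{\mathrm{err}}(H) \ge \epsilon$; let $E = \{i : H(x_i)\ne y_i\}$ be the error set, so $|E| \ge \epsilon m$. Because $H = \mathrm{MAJ}(h_1,\dots,h_T)$, every $x_i\in E$ is misclassified by at least half of the $h_t$'s, i.e.\ $\sum_{t=1}^T \ell^t(x_i) = \sum_t [h_t(x_i) = y_i] \le T/2$ for $i\in E$. Now take $D$ to be the uniform distribution on $E$. Since $|E|\ge \epsilon m$, this $D$ satisfies $\|D\|_\infty = 1/|E| \le 1/(\epsilon m)$, so $D\in\mathcal{P}_\epsilon$ and the regret bound applies. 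Then $\sum_t \langle D,\ell^t\rangle = \frac{1}{|E|}\sum_{i\in E}\sum_t \ell^t(x_i) \le T/2$. Feeding this into the regret bound:
\begin{equation*}
T\left(\tfrac12+\gamma\right) \le \sum_{t=1}^T \langle D^t,\ell^t\rangle \le (1+\gamma)\cdot\frac{T}{2} + \frac{R\zeta}{\gamma} + \frac{\RE(D\|D^1)}{\gamma}.
\end{equation*}
With $\RE(D\|D^1) = \log(m/|E|) \le \log(1/\epsilon)$, and the parameter choices above, the right side is at most $T/2 + T\gamma/2 + cT\gamma + \log(1/\epsilon)/\gamma$ for a small constant $c$. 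Rearranging gives $T\gamma/2 - cT\gamma \le \log(1/\epsilon)/\gamma$, i.e.\ $T = O(\log(1/\epsilon)/\gamma^2)$ forces a contradiction once $T$ exceeds this bound by a suitable constant; hence $\widehat{\mathrm{err}}(H) < \epsilon$.

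The main obstacle — and the place I would be most careful — is making the constants line up so that the $\gamma$-gain per round, which is $\Theta(T\gamma)$ on the left, genuinely dominates the combined error terms $R\zeta/\gamma + \RE/\gamma$ on the right. This is exactly where the lazy-projection choice $K=1/\gamma$ and precision $\zeta = \gamma/4$ matter: if $K$ were too small, $R$ would be close to $T$ and $R\zeta/\gamma$ would be $\Theta(T)$, swamping the gain; if $\zeta$ were too large, likewise. One must verify that $(1+\gamma)/2 + (\text{slack from } R\zeta/\gamma)/T$ stays bounded below $1/2+\gamma$ by a constant fraction of $\gamma$, which pins down the admissible constants in $K$ and $\zeta$ and in turn the constant in $T=O(\log(1/\epsilon)/\gamma^2)$. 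A secondary (routine) point is checking the inequalities $\log Z_t\le -\gamma\langle D^t,\ell^t\rangle$ and $-\log(1-\gamma)\le\gamma(1+\gamma)$ for $\gamma\le 1/2$ that were already used inside Theorem~\ref{thm:regret_lqb}; here I just need to confirm the weak-learner inequality $\langle D^t,\ell^t\rangle\ge 1/2+\gamma$ holds under the $01$-loss convention and that the error-set distribution is admissible in $\mathcal{P}_\epsilon$, both of which are immediate.
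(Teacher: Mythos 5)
Your proposal is correct and follows essentially the same route as the paper: the paper's proof also assumes $|E|\ge\epsilon m$ for contradiction, takes $D_E$ uniform on $E$, uses $\langle D^t,\ell^t\rangle\ge 1/2+\gamma$ and $\sum_t\langle D_E,\ell^t\rangle\le T/2$, bounds $\RE(D_E\|D^1)\le\log(1/\epsilon)$, and with $R\zeta = T\gamma^2/4$ arrives at the same $T\le 4\log(1/\epsilon)/\gamma^2$ threshold. The only (cosmetic) difference is that you invoke the regret bound of Theorem~\ref{thm:regret_lqb} as a black box, whereas the paper re-sums the per-iteration potential changes directly; the arithmetic is identical.
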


\begin{proof}
   We use the same potential function $\Psi^t(D) = \RE(D||D^t)$ as the previous proof. We wish to understand the size of the set $E$ of instances $x_i$ which are misclassified by the final hypothesis~$H$. Suppose, towards a contradiction, that $|E|\geq \epsilon m$. 
    Let $D_E$ be the uniform distribution over the set $E$.

We analyze the total change in the potential  over all $T$ iterations by a telescoping sum:
\begin{equation}
 \Delta \Psi_{Total}(D) = \Psi^{T+1}(D) - \Psi^1(D) = \sum_{t=1}^T (\Delta \Psi^t_{Update}(D) + \Delta \Psi^t_{Proj}(D)) 
\end{equation}
which holds for any $D$. Fixing $D$ as $D_E$, we use the previously derived bounds of Equation~\eqref{eq:upderrorUP} and \eqref{eq:projerrorUP}. Summing over $T$ iterations gives
    \begin{align*}
    \sum_{t=1}^T \Delta \Psi^t_{Update}(D_E) &\le \sum_{t=1}^T\gamma \left( \langle D_E, \ell^t \rangle(1+\gamma) - \langle D^t, \ell^t \rangle \right) \\
    & = \gamma\left((1+\gamma)\sum_{t=1}^T \langle D_E,\ell^t\rangle - \sum_{t=1}^T \langle D^t,\ell^t\rangle\right)\\
    &\le\gamma((1+\gamma)T/2 - (1/2+\gamma)T)) =-T\gamma^2/2\label{eq:agg_update}
    \end{align*}
    where the second inequality uses $\sum_{t=1}^T \langle D_E, \ell^t \rangle \leq T/2$ because $E$ is defined as the set of instances of $X$ where the final hypothesis $H$ (i.e., the majority vote of all $h_t$) fails; and it uses $\langle D^t, \ell^t \rangle \geq 1/2 + \gamma$ which comes from the definition of our weak learner.
    Furthermore, the aggregate error from the $R$ approximate projections is $\sum_{t=1}^T \Delta \Psi^t_{Proj}(D_E) \le R\zeta$. 

QuantBoost sets the projection interval to $K=1/\gamma$, so the total number of (approximate) projections is $R=T/K=T\gamma$ (assuming for simplicity that $1/\gamma$ and $T/K$ are integers).
It set the precision parameter $\zeta=\gamma/4$.
Hence the aggregate projection error is at most:
\begin{equation*} 
R\zeta = (T\gamma)\left(\frac{\gamma}{4}\right) = \frac{T\gamma^2}{4}. 
\end{equation*}
The total change in potential is therefore upper bounded by something negative:
\begin{equation}\label{eq:totalDelta}
     \Delta \Psi_{Total}(D_E) \le -\frac{T\gamma^2}{2} + \frac{T\gamma^2}{4} = -\frac{T\gamma^2}{4} .
\end{equation}
Because $D^1$ is the uniform distribution over the training set $X$, i.e.\ $D^1(x)=1/m$, we have
\begin{align*}
    \Psi^1(D_E) &= \RE(D_E || D^1) = \sum_{x \in E} D_E(x) \log\left(\frac{D_E(x)}{D^1(x)}\right).
\end{align*}
Since $D_E(x)=1/|E|$ for $x \in E$,
\begin{align*}
    \Psi^1(D_E) &= \sum_{x \in E} \frac{1}{|E|} \log\left(\frac{1/|E|}{1/m}\right) = \log\left(\frac{m}{|E|}\right) \leq \log(1/\epsilon).
\end{align*}
Using the fact that $\Psi^{T+1}(D_E) \ge 0$, we get $\Delta \Psi_{Total}(D_E)=\Psi^{T+1}(D_E)-\Psi^{1}(D_E)\geq -\log(1/\epsilon)$. Equation~\eqref{eq:totalDelta} implies
\begin{equation*}\label{eq:loss_E}
    \frac{T\gamma^2}{4} \le \log(1/\epsilon).
\end{equation*}
Accordingly, if $T > \frac{4\log(1/\epsilon)}{\gamma^2}$, then we obtain a contradiction, so there cannot exist a set $E$ of incorrectly classified instances of size $|E|\geq \epsilon m$ in the training set. 
Hence, with $T= \lfloor\frac{4\log(1/\epsilon)}{\gamma^2}\rfloor+1$,  QuantumBoost achieves empirical error $\mathrm{Pr}_{i \sim [m]}[H(x_i) \neq y_i] < \epsilon$.
\end{proof}

By applying Theorem~\ref{thm:error_bound} with an empirical-error bound of $\epsilon/2$, combined with Claim~\ref{claim:errors} with $\eta = \epsilon/2$, we may also bound the \emph{generalization} error performance of QuantumBoost,  assuming we have a sufficiently large training set.

\begin{corollary}[Generalization error of QuantumBoost]\label{cor:generror} 
Let $d$ be the VC-dimension of the weak learner's hypothesis class $\mathcal{H}$.  Assume the number of examples is at least 
\begin{equation*}
    m = \Theta\bigg(\frac{d \log(d/(\delta\epsilon)) + \log(1/\delta)}{\epsilon^2} \bigg),
\end{equation*}
with a sufficiently large constant in the $\Theta(\cdot)$. Then, for every $f\in\mathcal{C}$ and every distribution $D$ on the domain of $f$,  QuantumBoost outputs a hypothesis $H$ that, with probability $\geq 1-\delta$ (probability taken over the choice of the training set and the internal randomness of the algorithm), has generalization error 
\begin{equation*}
    \mathrm{err}(H)=\Pr_{x\sim D}[H(x) \neq f(x)] \leq \epsilon.
\end{equation*}
\end{corollary}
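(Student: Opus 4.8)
The plan is to combine the empirical-error guarantee of Theorem~\ref{thm:error_bound} with the uniform-convergence bound of Claim~\ref{claim:errors}, applied not to $\mathcal{H}$ but to the larger class $\bar{\mathcal{H}} = \{\mathrm{MAJ}(h_1,\dots,h_T)\mid h_i\in\mathcal{H}\}$ that the output hypothesis $H$ belongs to. First I would run QuantumBoost with the empirical-error target halved, i.e.\ aiming for $\widehat{\mathrm{err}}(H) < \epsilon/2$. By Theorem~\ref{thm:error_bound} this needs only $T = O(\log(1/\epsilon)/\gamma^2)$ iterations, and the output is $H = \mathrm{MAJ}(h_1,\dots,h_T)$ with each $h_t\in\mathcal{H}$, hence $H\in\bar{\mathcal{H}}$. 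By the claim on the VC-dimension of $\bar{\mathcal{H}}$ we have $d' := \text{VC-dim}(\bar{\mathcal{H}}) = \tilde{O}(T\cdot d)$; substituting $T = O(\log(1/\epsilon)/\gamma^2)$ then expresses $d'$ in terms of $d$, with the $\gamma$- and $\log(1/\epsilon)$-factors carried along into the statement's $\Theta(\cdot)$.

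Next I would invoke Claim~\ref{claim:errors} with hypothesis class $\bar{\mathcal{H}}$, its VC-dimension $d'$, and slack $\eta = \epsilon/2$. The right-hand side $8(em/d')^{d'}\exp(-m\epsilon^2/128)$ is at most $\delta/2$ once $m\epsilon^2/128 \ge d'\log(em/d') + \log(16/\delta)$, which holds for $m = \Theta\big((d'\log(d'/(\delta\epsilon)) + \log(1/\delta))/\epsilon^2\big)$ with a large enough constant; after plugging in $d' = \tilde{O}(Td)$ this is exactly the claimed $m$. Thus, with probability $\ge 1-\delta/2$ over the random draw of the training set $S$, \emph{every} $h\in\bar{\mathcal{H}}$ — in particular the data-dependent output $H$ — satisfies $\mathrm{err}(h) \le \widehat{\mathrm{err}}(h) + \epsilon/2$. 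Here I use that the labels are $y_i = f(x_i)$, so the empirical error against $S$ in Theorem~\ref{thm:error_bound} coincides with the empirical error against $f$ appearing in Claim~\ref{claim:errors}.

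Finally I would union-bound over the two sources of failure. The randomized subroutines inside QuantumBoost — the $O(T/K) = O(T\gamma)$ approximate Bregman projections (Theorem~\ref{thm:quantum_approx}) and the state preparations fed to the weak learner — each succeed with probability $\ge 1-\delta'$; choosing $\delta' = \Theta(\delta/T)$ costs only polylogarithmic factors in the runtime and ensures that, with probability $\ge 1-\delta/2$ over the internal randomness, all of them succeed, so that the (deterministic) analysis behind Theorem~\ref{thm:error_bound} applies and $\widehat{\mathrm{err}}(H) < \epsilon/2$. Combining with the $1-\delta/2$ event from the previous paragraph, with probability $\ge 1-\delta$ we get $\mathrm{err}(H) \le \widehat{\mathrm{err}}(H) + \epsilon/2 < \epsilon$. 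The main point to get right is the first step: uniform convergence must be established over the whole combined class $\bar{\mathcal{H}}$, not just for the particular $H$ (which depends on $S$), so the sample size is driven by $\text{VC-dim}(\bar{\mathcal{H}}) = \tilde{O}(Td)$ rather than by $d$ alone; a direct application of VC bounds to $\mathcal{H}$ would not suffice. The secondary obstacle is the bookkeeping of the failure probabilities of the algorithm's internal subroutines, which must be kept small enough ($\tilde{O}(\delta)$ per call) not to inflate $T$ or $m$ beyond the stated orders. Everything else is routine tracking of constants in the $\Theta(\cdot)$.
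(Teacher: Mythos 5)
Your proposal is correct and follows exactly the route the paper intends (the paper only sketches it in the sentence preceding the corollary): apply Theorem~\ref{thm:error_bound} with empirical-error target $\epsilon/2$, then Claim~\ref{claim:errors} with $\eta=\epsilon/2$ over the combined class $\bar{\mathcal{H}}$ of VC-dimension $\tilde{O}(Td)$, and union-bound with the algorithm's internal failure probability. Your emphasis on needing uniform convergence over $\bar{\mathcal{H}}$ rather than $\mathcal{H}$ (since $H$ is data-dependent) is precisely the right key point, and the resulting sample bound matches the statement up to the $T$-dependent factors the paper absorbs into its $\Theta(\cdot)$.
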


\subsection{Overall runtime of QuantumBoost}\label{sec:overall_cost}
In the following theorem, we use quantum techniques to improve the runtime for computing approximate Bregman projections. In particular, using Lemma~\ref{lem:approx} we speed up the estimatation of the constant $\tilde{c}$ associated with an implicit representation of an approximate Bregman projection.

\begin{theorem}[Faster approximate Bregman projection]\label{thm:quantum_approx} Let $N$ be a measure on domain $X$ with exact projection onto the set $\Gamma_\epsilon$ given by $M^* = \mathrm{min}(1, c \cdot N)$ where $1 \leq c \leq 1+\rho$. Suppose further that we can compute entries of the measure $N$ and generate uniform superpositions over the domain $X$ in runtime $t$. Then, an implicit representation of a $\zeta \epsilon |X|$-approximation of $M^*$ (in the form of a constant $\tilde{c}$) can be computed on a quantum computer in runtime
\begin{equation}
O\bigg( \frac{t}{\sqrt{\epsilon}\zeta}( \log\log\frac{\rho}{\zeta} + \log(1/\delta))\log \frac{\rho}{\zeta}\bigg) = \tilde{O}\bigg(\frac{t}{\sqrt{\epsilon} \zeta}\bigg)
\end{equation}
with success probability $\geq 1-\delta$.  
\end{theorem}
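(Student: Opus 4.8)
The plan is to mirror the classical proof of Lemma~\ref{lem:approx} (i.e., \cite[Lemma 3.2]{barak2009uniform}) but replace the step where the density $\mu(\widetilde M)$ of a candidate measure $\widetilde M = \min(1, \tilde c \cdot N)$ is estimated. In the classical argument, given a candidate constant $\tilde c$, one draws $\Theta(1/(\epsilon\zeta^2))$ uniform samples from $X$, evaluates $\min(1, \tilde c\cdot N(x))$ for each (at cost $t$ per evaluation), and averages to get a multiplicative-$\zeta$ estimate of $\mu(\widetilde M) = \frac{1}{|X|}\sum_x \min(1,\tilde c \cdot N(x))$; this works because $\mu(\widetilde M)\ge \epsilon$ along the binary search, so a Chernoff bound gives the sample size $1/(\epsilon\zeta^2)$. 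The binary search over $\tilde c\in[1,1+\rho]$ for a value with $\epsilon\le\mu(\widetilde M)\le(1+\zeta)\epsilon$ terminates in $O(\log\log(\rho/\zeta))$ rounds of a "one-sided" search plus an outer $O(\log(\rho/\zeta))$ factor, exactly as in \cite{barak2009uniform}, and each round needs the density estimate to succeed, costing a $\log(1/\delta)$ overhead by a union bound.

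First I would set up the mean-estimation instance: define $x_i := \min(1, \tilde c \cdot N(x_i))\in[0,1]$ for the elements $x_i$ of $X$, so that $\mu = \frac{1}{|X|}\sum_i x_i = \mu(\widetilde M)$, and observe that we have quantum query access $\ket{i}\ket{0}\mapsto\ket{i}\ket{x_i}$ in runtime $t$ (we can generate uniform superpositions over $X$ and compute entries of $N$, hence of $\min(1,\tilde c\cdot N)$, by hypothesis). Since along the relevant part of the binary search $\mu(\widetilde M)\ge\epsilon$, we may invoke Theorem~\ref{thm:mean_estimate} with the lower bound "$\epsilon/2$" (in fact $\epsilon$) and multiplicative precision $\zeta$, obtaining an estimate within a $(1\pm\zeta)$ factor using $\tilde O\!\left(\frac{\log(1/\delta')}{\sqrt\epsilon\,\zeta}\right)$ queries and $\tilde O\!\left(\frac{\log(1/\delta')}{\sqrt\epsilon\,\zeta}\right)$ other operations, each query costing $t$. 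This is the single quadratic-in-$1/\zeta$ and quadratic-in-$1/\epsilon$ improvement over the classical $1/(\epsilon\zeta^2)$ cost.

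Then I would reassemble the binary-search accounting verbatim from \cite[Lemma 3.2]{barak2009uniform}: the search makes $O\big(\log\log(\rho/\zeta)\big)$ density estimates in its inner loop and repeats the outer refinement $O\big(\log(\rho/\zeta)\big)$ times, and we set the per-estimate failure probability to $\delta' = \delta/\big(\text{number of estimates}\big)$, contributing the $\log(1/\delta)$ factor by a union bound. Multiplying the per-estimate cost $\tilde O\!\big(\frac{t}{\sqrt\epsilon\,\zeta}\big)$ by the $O\big(\log\log(\rho/\zeta) + \log(1/\delta)\big)$ and $O\big(\log(\rho/\zeta)\big)$ factors yields exactly the stated runtime
\[
O\!\bigg(\frac{t}{\sqrt\epsilon\,\zeta}\Big(\log\log\tfrac{\rho}{\zeta} + \log\tfrac1\delta\Big)\log\tfrac\rho\zeta\bigg) = \tilde O\!\bigg(\frac{t}{\sqrt\epsilon\,\zeta}\bigg),
\]
and correctness (that the resulting $\tilde c$ gives a $\zeta\epsilon|X|$-approximation of $M^*$) follows because the guarantee $\epsilon\le\mu(\widetilde M)\le(1+\zeta)\epsilon$ is preserved — it is the same reduction Barak--Hardt--Kale use to pass from this density sandwich to Definition~\ref{defn_approx}.

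**The main obstacle** I anticipate is not the mean-estimation substitution itself but verifying that Theorem~\ref{thm:mean_estimate}'s \emph{multiplicative}-error guarantee suffices at every step of the binary search — in particular near the boundary where $\mu(\widetilde M)$ is only slightly above $\epsilon$, one must check that a $(1\pm\zeta)$ multiplicative estimate still lets the search correctly decide whether to move $\tilde c$ up or down without overshooting the $[\epsilon,(1+\zeta)\epsilon]$ window; this is handled in \cite{barak2009uniform} by a constant-factor adjustment of the target interval, and I would simply import that adjustment. A secondary point to be careful about is that Theorem~\ref{thm:mean_estimate} requires the lower bound $\mu\ge\epsilon/2$ to hold as a promise, so the binary search must be organized so that density estimates are only ever called on candidates $\widetilde M$ known a priori to satisfy $\mu(\widetilde M)\ge\epsilon$ (e.g.\ because $\tilde c\ge 1$ and $M^*=\min(1,c\cdot N)\in\Gamma_\epsilon$ with $c\ge 1$) — again this is exactly how the classical search is structured, so no new idea is needed, only bookkeeping.
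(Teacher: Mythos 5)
Your proposal is correct and follows essentially the same route as the paper: reduce to finding $\tilde{c}$ with $\epsilon\le\mu(\widetilde{M})\le(1+\zeta)\epsilon$ via Lemma~\ref{lem:approx}, run the Barak--Hardt--Kale binary search, and replace the classical $O(1/(\epsilon\zeta^2))$ sampling-based density estimate with the quantum mean estimation of Theorem~\ref{thm:mean_estimate} at cost $\tilde{O}(t/(\sqrt{\epsilon}\zeta))$ per call, with a union bound over the search steps. Your additional remarks on the multiplicative-error guarantee near the boundary and on the promise $\mu\ge\epsilon/2$ are sensible bookkeeping points that the paper's (terser) proof leaves implicit.
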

\begin{proof}
    Recall from  Lemma~\ref{lem:approx} (\cite[Lemma 3.2]{barak2009uniform}) that it suffices to find a constant $\tilde{c} \in [1, 1+\rho]$ such that the measure $\widetilde{M}:=\mathrm{min}(1, \tilde{c}\cdot N)$ satisfies
\begin{equation}
\epsilon \leq \mu(\widetilde{M}) \leq (1+\zeta)\epsilon.
\end{equation}
Using binary search, in combination with the faster mean estimation of Theorem~\ref{thm:mean_estimate}, we may find such a constant in the stated runtime using binary search. Binary search has $\log(\frac{\rho}{\zeta})$ steps, each of which involves a single call to the algorithm of Theorem~\ref{thm:mean_estimate} with runtime $\tilde{O}(t/(\sqrt{\epsilon}\zeta))$ and error probability set to~$\delta' \leq \delta/\log(\rho/\zeta)$. Taking a union bound over the number of binary search steps shows that they all succeed except with error probability $\leq\delta'\log(\rho/\zeta)\leq\delta$. 
\end{proof}

Lastly, we upper bound the runtime of QuantumBoost.

\begin{theorem}[Runtime of QuantumBoost]
Given access to a $\gamma$-weak learner with runtime $W$ for the concept class $\mathcal{C}$ with hypothesis class $\mathcal{H}$, and a training set $S = \{(x_i, y_i)\}_{i=1}^m$, QuantumBoost (Algorithm~\ref{alg:lqb}) produces (with probability $\geq 1-\delta$) a hypothesis with generalization error at most $\epsilon$, using $m = \tilde{\Theta}(d/\epsilon^2)$ examples where $d$ is the VC-dimension of $\mathcal{H}$. QuantumBoost makes $T=O(\log(1/\epsilon)/\gamma^2)$ calls to the weak learner and has overall runtime 
$$ \tilde{O}\left(\frac{W}{\sqrt{\epsilon}\gamma^4}\right). $$
\end{theorem}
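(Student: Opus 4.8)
The strategy is to account for the cost of the $T = O(\log(1/\epsilon)/\gamma^2)$ iterations by splitting each iteration into three contributions: (i) preparing the $W$ copies of the quantum example state $\ket{D^t}$, (ii) running the weak learner, and (iii) occasionally (every $K$-th iteration) computing the approximate Bregman projection via Theorem~\ref{thm:quantum_approx}. The key bookkeeping fact, already flagged in the text, is that at iteration $t$ the measure $M^t$ has an \emph{implicit} representation: it is determined by the stored hypotheses $h_1,\dots,h_{t-1}$ and the stored projection constants $\tilde c_s$ for the (at most $R$) projection steps $s<t$. Hence a single entry $M^t(x_i)$ can be computed by querying the QCROM for $(x_i,y_i)$, evaluating each $h_s(x_i)$ via $O_{\mathcal H}$ (at $\tilde O(1)$ cost each) to get the loss bits $\ell^s(x_i)$, multiplying the $(1-\gamma)^{\ell^s(x_i)}$ factors, and applying the relevant $\min(1,\tilde c_s\cdot(\cdot))$ clippings. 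This costs $\tilde O(t)$ per entry, and crucially it is a \emph{unitary} computation (no QCRAM needed), so it can be run coherently on superpositions of indices $i$.

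\textbf{Costing the three contributions.} For (i), I would invoke Theorem~\ref{thm:state_prep}: since $D^t = M^t/|M^t|$ with $|M^t|\ge \epsilon m$ (the measure stays in $\Gamma_\epsilon$ after each projection, and between projections the weight only decreases but we maintain the invariant that the last projection was recent enough — more carefully, one tracks that $|M^t|$ never drops below $\epsilon m$ because we reset via projection every $K=1/\gamma$ steps and each update multiplies weight by at least $(1-\gamma)$, so over $K$ steps the weight shrinks by at most a constant factor; this is where the choice $K=1/\gamma$ pays off), preparing one copy of $\ket{D^t}$ costs $O(1/\sqrt\epsilon)$ queries to the entries of $M^t$, each query costing $\tilde O(t)$, for $\tilde O(t/\sqrt\epsilon)$ per copy and $\tilde O(Wt/\sqrt\epsilon)$ for all $W$ copies. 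For (ii), running the weak learner costs $W$ by definition. For (iii), Theorem~\ref{thm:quantum_approx} with $\zeta=\gamma/4$ and "runtime $t$ per entry" equal to $\tilde O(t)$ gives projection cost $\tilde O(t/(\sqrt\epsilon\,\zeta)) = \tilde O(t/(\sqrt\epsilon\,\gamma))$ at iteration $t$.

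\textbf{Summing over iterations.} Summing contribution (i) over $t=1,\dots,T$ gives $\tilde O\!\big(\sum_t Wt/\sqrt\epsilon\big) = \tilde O(WT^2/\sqrt\epsilon)$; with $T=\tilde O(1/\gamma^2)$ this is $\tilde O(W/(\sqrt\epsilon\,\gamma^4))$, which is the claimed bound. Contribution (ii) sums to $WT = \tilde O(W/\gamma^2)$, dominated by the above. For contribution (iii), only $R=\lceil T/K\rceil = \tilde O(T\gamma) = \tilde O(1/\gamma)$ iterations do a projection, each costing $\tilde O(t/(\sqrt\epsilon\,\gamma)) \le \tilde O(T/(\sqrt\epsilon\,\gamma)) = \tilde O(1/(\sqrt\epsilon\,\gamma^3))$, so the total projection cost is $\tilde O(1/(\sqrt\epsilon\,\gamma^4))$, again dominated since $W\ge 1$. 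The success probability is handled by running each projection with failure probability $\delta/R$ (or $\delta/\text{poly}$) and union-bounding, plus invoking Corollary~\ref{cor:generror} for the generalization guarantee with the stated $m = \tilde\Theta(d/\epsilon^2)$; note $m$ enters only through the lower bound $|M^1|=\epsilon m$ used in state preparation, not through an explicit linear factor, which is the whole point. Applying Theorem~\ref{thm:error_bound} with target empirical error $\epsilon/2$ absorbs the constant-factor change into the polylogs.

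\textbf{Main obstacle.} The delicate part is verifying the invariant $|M^t|\ge \epsilon m$ (equivalently $\mu(M^t)\ge\epsilon$, i.e.\ $M^t\in\Gamma_\epsilon$) at \emph{every} iteration, including the non-projection steps, since Theorem~\ref{thm:state_prep} and the mean-estimation subroutine both require this lower bound on the weight to get the $1/\sqrt\epsilon$ scaling. Right after a projection $M^{t+1}\in\Gamma_\epsilon$ by definition of the approximate projection; but during the subsequent $K-1$ update-only steps the weight can only decrease, so one must argue it does not drop below $\epsilon m$ before the next projection — or, more conservatively, accept a slightly smaller effective density (say $\epsilon/2$) throughout and absorb the constant, using that $K=1/\gamma$ update steps multiply total weight by at least $(1-\gamma)^{K}\ge (1-\gamma)^{1/\gamma} = \Omega(1)$. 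Threading this constant consistently through the state-preparation and projection subroutines (which is why the theorem statements in the preliminaries are phrased with $\epsilon/2$ and $\epsilon$ interchangeably) is the one place where care is genuinely required; everything else is the geometric-type sum $\sum_t \tilde O(t) = \tilde O(T^2)$.
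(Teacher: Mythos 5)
Your proposal is correct and follows essentially the same route as the paper's proof: the same implicit $\tilde{O}(t)$-per-entry representation of $M^t$, the same invariant $|M^t|=\Omega(\epsilon m)$ maintained via $(1-\gamma)^{K}=\Omega(1)$ with $K=1/\gamma$, and the same two dominant sums $\tilde{O}(WT^2/\sqrt{\epsilon})$ for state preparation and $\tilde{O}(RT/(\sqrt{\epsilon}\zeta))$ for the projections. Your explicit accounting of the weak-learner calls and the union bound over projection failures are details the paper treats more tersely, but they do not change the argument.
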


\begin{proof}
As mentioned,  due to the recursive structure of $M^t$, keeping track of the constants $\tilde{c}_t$ at every $K^\text{th}$ iteration, along with the hypotheses~$h_t$ generated by the weak learner in all iterations, enables us to compute entries of $M^t$ in $O(t)$ runtime.  At iteration~$t$, the weak learner takes quantum examples (at most $W$ of them) as an input, defined as 
\begin{equation}
    \ket{D^t} = \sum_{x_i \in X} \sqrt{\frac{M^t(x_i)}{|M^t|}} \ket{x_i, y_i}.
\end{equation}
where $X := \{x_i : (x_i, y_i) \in S \}$.
Since the weight $|M^t|$ decreases by at most a factor $(1-\gamma)$ per iteration, over a sequence of $K=1/\gamma$ (non-projecting) iterations, the decrease is lower bounded by a factor $(1-\gamma)^K \approx e^{-\gamma K} = \Omega(1)$. 
After every $K$ iterations, we (approximately) project the measure back onto the set $\Gamma_\epsilon$ of measures of weight $\geq\epsilon m$. These two things together ensure that $|M^t| = \Omega(\epsilon m)$ throughout the algorithm, so we can prepare a quantum example with $\tilde{O}(1/\sqrt{\epsilon})$ operations, times the runtime incurred for computing entries of the $M^t$ vector, using the state-preparation subroutine in Theorem~\ref{thm:state_prep}.
 The aggregate runtime incurred for example-preparation over all $T = \tilde{O}(1/\gamma^2)$ calls to the weak learner, is then
\begin{equation*}\label{eq:state_cost_lqb}
    \tilde{O}\left(T\cdot W\cdot \frac{T}{\sqrt{\epsilon}}\right) = \tilde{O}\left(\frac{W}{\sqrt{\epsilon}\gamma^4}\right).
\end{equation*}
The approximate Bregman projection subroutine at iteration $t$ has runtime $\tilde{O}(t/(\sqrt{\epsilon}\zeta))$ by Theorem~\ref{thm:quantum_approx}. By Theorem~\ref{thm:error_bound}, setting $\zeta = \gamma/4$ and summing over the $R=T/K = 1/\gamma$ iterations where projections occur, yields an aggregate runtime for the  projections of
\begin{equation*}\label{eq:proj_cost_lqb} \tilde{O}\left(\frac{R \cdot T}{\sqrt{\epsilon}\zeta}\right) = 
    \tilde{O}\left(\frac{(1/\gamma)(1/\gamma^2)}{\sqrt{\epsilon}\gamma}\right) = \tilde{O}\left(\frac{1}{\sqrt{\epsilon}\gamma^4}\right).
    \end{equation*}
The total runtime for QuantumBoost is the sum of \eqref{eq:state_cost_lqb} and~\eqref{eq:proj_cost_lqb}, concluding the proof.
\end{proof}

\section{Future work}\label{sec:future}

We mention a few questions for future work:
\begin{itemize}
\item Can we improve the 4th-power dependence on $\gamma$ of QuantumBoost to something better?
\item All quantum boosters (including ours) take an iterative classical booster and improve the average runtime cost \emph{per iteration}, while keeping the number of iterations essentially the same. Can we also reduce the number of iterations in some scenario? 
\item Classical boosting algorithms either use $\Omega(1/\gamma^2)$ iterations (that is, interacting with a weak learner
for $\Omega(1/\gamma^2)$ rounds) or incur an $\exp(d)$ blow-up in the runtime of training~\cite{KL23Boosting,LWY24Boosting}, suggesting every efficient classical boosting algorithm uses $\Omega(1/\gamma^2)$ iterations. Are there analogous limitations for quantum boosting algorithms?
\item As raised in prior work by Izdebski and de Wolf~\cite{izdebski2020improved}, can QuantumBoost be modified for an agnostic learning setting, where $(x, y)$ pairs follow a joint distribution on $\mathcal{X} \cross \mathcal{Y}$?
\item Lastly, are there practical applications where QuantumBoost outperforms (at least in theory) all known boosting algorithms?
\end{itemize}

\paragraph{Acknowledgments.}
We would like to acknowledge Deep Think, an LLM produced by Google DeepMind, which was instrumental in proving error convergence using lazy projections. While the core idea for using lazy projections was generated by the authors, Deep Think recognized that the proof technique using KL-divergence (which is what we originally tried) would not work, and switching to relative entropy allowed us to prove the result. We further thank Jarrod McClean, Robin Kothari, Dar Gilboa and Sid Jain for useful discussions regarding the paper.

\bibliographystyle{alpha}
\bibliography{notes}

\appendix

\section{Proofs}
For better readability, we deferred the following proofs to this appendix, restating them for convenience.

\subsection{Faster mean estimation}\label{app:mean_est}

\begingroup
\renewcommand{\thetheorem}{\ref{thm:mean_estimate}}
\begin{theorem}[Mean estimation, see e.g.\ \cite{BHT98QCounting,AR20SimplifiedQCounting}]
    Let $\epsilon,\delta\in(0,1)$ and $\zeta \in(0,0.5)$. Furthermore, let $x_1,\ldots,x_N \in [0,1]$ with (unknown) mean $\mu = \frac{1}{N}\sum_{i\in[N]} x_i$. Suppose we have quantum query access to $O_x:\ket{i}\ket{0}\rightarrow \ket{i}\ket{x_i}$ and we know $\mu\geq \epsilon/2$. Then there exists a quantum algorithm that, with success probability $\geq 1-\delta$, estimates $\mu$ with multiplicative error $\zeta$ using $\bigOt{\frac{\log(1/\delta)}{\sqrt{\epsilon}\zeta}}$ applications of $O_x$ and its inverse, and similarly, $\bigOt{\frac{\log(1/\delta)}{\sqrt{\epsilon}\zeta}}$ other operations.
\end{theorem}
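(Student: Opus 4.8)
The plan is to reduce multiplicative estimation of $\mu$ to the standard quantum counting/amplitude-estimation primitive. First I would build, from the oracle $O_x:\ket{i}\ket{0}\mapsto\ket{i}\ket{x_i}$, a unitary $U$ acting on $\ket{0^n}\ket{0}$ that first creates the uniform superposition $\frac{1}{\sqrt N}\sum_i\ket{i}$, then queries $O_x$ into an ancilla register, and then performs a controlled rotation that flips a final qubit to $\ket{0}$ with amplitude $\sqrt{x_i}$ (e.g.\ $\ket{x_i}\ket{0}\mapsto\ket{x_i}(\sqrt{x_i}\ket{0}+\sqrt{1-x_i}\ket{1})$). Uncomputing the data register (or leaving it entangled but tracking it inside $\ket{\phi_0},\ket{\phi_1}$) gives exactly the form required by Theorem~\ref{thm:amplitude_estimation}, namely $U\ket{0^n}\ket 0=\sqrt a\ket{\phi_0}\ket 0+\sqrt{1-a}\ket{\phi_1}\ket 1$ with $a=\frac1N\sum_i x_i=\mu$. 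Each application of $U$ or $U^\dagger$ costs $\tilde O(1)$ gates plus one query, and the controlled rotation gate by a value given in binary can be implemented to sufficient precision in $\tilde O(1)$ elementary gates, so this is cheap.

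Next I would invoke Theorem~\ref{thm:amplitude_estimation} with the parameter $A=\lceil c/(\sqrt{\epsilon}\,\zeta)\rceil$ for a suitable constant $c$, and failure probability $\delta$. This yields, with probability $\ge 1-\delta$, an estimate $\lambda$ with $|\lambda-\sqrt\mu|\le 1/A$, using $O(A\log(1/\delta))$ applications of $U,U^\dagger$ and $\tilde O(A\log(1/\delta))$ other gates, i.e.\ $\tilde O(\log(1/\delta)/(\sqrt{\epsilon}\,\zeta))$ queries and operations, matching the claimed bound. It then remains to convert the additive guarantee on $\sqrt\mu$ into a multiplicative guarantee on $\mu$. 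Output $\hat\mu:=\lambda^2$. Then $|\hat\mu-\mu|=|\lambda-\sqrt\mu|\cdot|\lambda+\sqrt\mu|\le \frac1A\,(2\sqrt\mu+\frac1A)$. Using the promise $\mu\ge\epsilon/2$, so $\sqrt\mu\ge\sqrt{\epsilon/2}$, one checks that with $A\ge c/(\sqrt\epsilon\,\zeta)$ and $\zeta<1/2$ we get $1/A\le \zeta\sqrt{\epsilon/2}/c'\le \zeta\sqrt\mu\cdot(\text{const})$, hence $|\hat\mu-\mu|\le \frac1A(2\sqrt\mu+\sqrt\mu)\le O(\zeta)\,\mu$; choosing the constant $c$ appropriately makes the relative error at most $\zeta$. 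This is the only genuinely quantitative step, and it is a two-line calculation.

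The main obstacle — such as it is — is bookkeeping rather than anything deep: one must make sure the $1/A$ additive error on $\sqrt\mu$ really does translate to a $\zeta$-multiplicative error on $\mu$ \emph{given only the lower bound} $\mu\ge\epsilon/2$ (an \emph{upper} bound is not needed, since $\mu\le 1$ automatically because the $x_i\in[0,1]$), and to absorb the constant factors from squaring and from the choice of $A$ into the $\tilde O(\cdot)$ and the constant $c$. One should also note that the precision needed inside the controlled rotation (so that the effective amplitude is $\sqrt{x_i}$ up to error $\ll \zeta\epsilon$) costs only $\polylog$ extra gates and is swallowed by the $\tilde O$. Finally I would remark that the algorithm uses $O_x$ and $O_x^\dagger$ (the latter for uncomputation inside $U^\dagger$), consistent with the statement, and that no QCRAM is needed — only the ability to query $O_x$ coherently, which is exactly the hypothesis.
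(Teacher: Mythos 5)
Your proposal is correct and follows essentially the same route as the paper's own proof in Appendix A.1: prepare a uniform superposition, query $O_x$, apply a controlled rotation to encode $\mu$ as an amplitude, run amplitude estimation with precision $1/A = \Theta(\sqrt{\epsilon}\,\zeta)$, and square the result, converting the additive error on $\sqrt{\mu}$ into a multiplicative error on $\mu$ via the promise $\mu \geq \epsilon/2$. The only differences are cosmetic (which basis state marks the "good" outcome, and leaving the constant $c$ implicit rather than fixing $1/A = \sqrt{\epsilon}\,\zeta/2$ as the paper does).
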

\endgroup

\begin{proof}
    We need to output an estimator $\hat{\mu}$, such that $|\hat{\mu}-\mu|\leq \mu \zeta$ with success probability $1-\delta$.
    We first show that we can implement $U_\mu:\ket{0}\ket{0}\rightarrow \sqrt{\mu}\ket{\phi_1}\ket{1}+\sqrt{1-\mu}\ket{\phi_0}\ket{0}$ (for some normalized states $\ket{\phi_0},\ket{\phi_1}$) using one query to $O_x$ and $\bigOt{1}$ other elementary gates. Define the controlled rotation such that for each $a\in[0,1]$
\begin{equation*}
U_{CR}: \ket{a} \ket{0} \rightarrow 
    \ket{a}(\sqrt{a}\ket{1}+\sqrt{1-a}\ket{0}).
\end{equation*}
This can be implemented up to negligibly small error by $\tilde{O}(1)$ elementary gates.
Starting with the easy-to-prepare state $\frac{1}{\sqrt{N}}\sum_{i\in[N]}\ket{i}\ket{0}\ket{0}$, apply $O_x$ on the first two registers, followed by $U_{CR}$ on the second and third registers to obtain
\begin{align*}
    \frac{1}{\sqrt{N}}\sum\limits_{i\in[N]}\ket{i}\ket{x_i}\big(\sqrt{x_i}\ket{1} + \sqrt{1-x_i}\ket{0}\big)\equiv \sqrt{\mu}\ket{\phi_1}\ket{1}+\sqrt{1-\mu}\ket{\phi_0}\ket{0}.
\end{align*}
By Theorem~\ref{thm:amplitude_estimation}, we can find an estimator $\lambda$ such that with probability $\geq 1-\delta$, 
\begin{equation*}
    |\lambda-\sqrt{\mu}|\leq \sqrt{\epsilon}\zeta/2,
\end{equation*}
using time $\bigOt{\frac{\log(1/\delta)}{\sqrt{\epsilon}\zeta}}$ and applications to $U_{\mu}$ and its inverse. Hence we can output an estimator $\hat{\mu}=\lambda^2$ satisfying 
\begin{align*}
    |\hat{\mu}-\mu| &=|\lambda^2-\mu| \\ 
    & =|\lambda+\sqrt{\mu}|\cdot |\lambda-\sqrt{\mu}| \\ 
    &\leq (2\sqrt{\mu}+\sqrt{\epsilon}\zeta/2)\cdot \sqrt{\epsilon}\zeta/2 \\
        &\leq {\mu}\zeta,
    \end{align*}
    where the last inequality used $\epsilon/2\leq \mu$ and $\zeta\leq 1/2$.
    We used $\bigOt{\frac{\log(1/\delta)}{\sqrt{\epsilon}\zeta}}$ time and applications of $U_\mu$ and its inverse. Since we can implement $U_\mu$ using just one query to $O_x$, we have finished the proof.
\end{proof}

\subsection{Approximate Bregman projections in relative entropy terms}\label{app:re_projs}

\begingroup
\renewcommand{\thetheorem}{\ref{lem:RE_approx}}
\begin{lemma}[Approximate Bregman Projection and Relative Entropy] Let $M_E$ be any measure with weight $|M_E| = \epsilon m$, and  $D_E = M_E/|M_E|$ be its associated distribution. Let $M^{t+1}$ be a measure that is a $\zeta\epsilon m$-approximation  of $N^{t+1}=M^t(1-\gamma)^{\ell^t}$, $D^{t+1}=M^{t+1}/|M^{t+1}|$ the associated distribution of $M^{t+1}$ and $\hat{D}^{t+1}=N^{t+1}/|N^{t+1}|$ the associated distribution of $N^{t+1}$. Then, 
\begin{equation*}
    \KL(M_E || M^{t+1}) - \KL(M_E||M^*) \leq \zeta \epsilon m
\end{equation*}
    implies 
    \begin{equation*}
        \RE(D_E||D^{t+1}) - \RE(D_E||D^*) \leq \zeta
    \end{equation*}
    and
\begin{equation*}        \RE(D_E||D^{t+1}) - \RE(D_E||\hat{D}^{t+1}) \leq \zeta
    \end{equation*}
    where $M^*$ is the exact projection of $N^{t+1}$, with associated distribution $D^*=M^*/|M^*|$.
\end{lemma}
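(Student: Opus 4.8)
The plan is to push everything through the measure-to-distribution identity of Fact~\ref{identity:KLRE} together with the stated hypothesis $\KL(M_E\|M^{t+1})-\KL(M_E\|M^*)\le\zeta\epsilon m$, proving the first displayed conclusion first and then deducing the second from it. The basic observation I would use is that, applied with $|M_E|=\epsilon m$, Fact~\ref{identity:KLRE} reads $\KL(M_E\|B)=\epsilon m\big(\RE(D_E\|D_B)+g(|B|/(\epsilon m))\big)$ for any measure $B$, where $g(t)=t-1-\log t\ge 0$ for $t>0$ (equivalently $\log t\le t-1$, with equality at $t=1$).

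For the first conclusion, I would apply this to $B=M^{t+1}$ and to $B=M^*$. Here one invokes that the exact Bregman projection sits on the boundary of $\Gamma_\epsilon$: by Lemma~\ref{lem:implicit_rep}, $M^*=\min(1,cN^{t+1})$ for some $c\ge1$, and it has weight exactly $|M^*|=\epsilon m$, so the $g$-term for $M^*$ vanishes and $\KL(M_E\|M^*)=\epsilon m\,\RE(D_E\|D^*)$. Subtracting and using the hypothesis,
\[
\zeta\epsilon m\ \ge\ \KL(M_E\|M^{t+1})-\KL(M_E\|M^*)\ =\ \epsilon m\big(\RE(D_E\|D^{t+1})-\RE(D_E\|D^*)\big)+\epsilon m\, g\!\left(\frac{|M^{t+1}|}{\epsilon m}\right)\ \ge\ \epsilon m\big(\RE(D_E\|D^{t+1})-\RE(D_E\|D^*)\big),
\]
since $g\ge0$; dividing by $\epsilon m$ gives $\RE(D_E\|D^{t+1})-\RE(D_E\|D^*)\le\zeta$.

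For the second conclusion, I would write $\RE(D_E\|D^{t+1})-\RE(D_E\|\hat{D}^{t+1})=\big(\RE(D_E\|D^{t+1})-\RE(D_E\|D^*)\big)+\big(\RE(D_E\|D^*)-\RE(D_E\|\hat{D}^{t+1})\big)$; the first bracket is $\le\zeta$ by the first conclusion, so it remains to show $\RE(D_E\|D^*)\le\RE(D_E\|\hat{D}^{t+1})$, i.e.\ that the normalized exact projection $D^*$ is at least as close to $D_E$ in relative entropy as the normalized update $\hat{D}^{t+1}$. Rather than going through weights, I would argue pointwise: $\RE(D_E\|\hat{D}^{t+1})-\RE(D_E\|D^*)=\sum_x D_E(x)\log\frac{D^*(x)}{\hat{D}^{t+1}(x)}$, and for every $x$ with $N^{t+1}(x)>0$,
\[
\frac{D^*(x)}{\hat{D}^{t+1}(x)}=\frac{M^*(x)}{N^{t+1}(x)}\cdot\frac{|N^{t+1}|}{|M^*|}\ \ge\ c\cdot\frac{|N^{t+1}|}{|M^*|}\ \ge\ 1,
\]
using that $\min(1,cN^{t+1}(x))/N^{t+1}(x)\ge c$ in both cases ($cN^{t+1}(x)\le1$ gives $c$; $cN^{t+1}(x)>1$ gives $1/N^{t+1}(x)>c$) and that $|M^*|=\sum_x\min(1,cN^{t+1}(x))\le c\,|N^{t+1}|$. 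Every term in the sum is thus nonnegative, and $x$ outside the support of $N^{t+1}$ either contributes $0$ or makes the right-hand side infinite.

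The delicate point -- and the place where I expect the main difficulty -- is that $\KL$ between measures and $\RE$ between their normalizations differ by terms involving the total weights, and these cannot be handled crudely: for instance, the inequality $\epsilon m\log(|M^{t+1}|/|N^{t+1}|)\le|M^{t+1}|-|N^{t+1}|$ that one would want for a direct comparison of $D^{t+1}$ and $\hat{D}^{t+1}$ is simply false when $|N^{t+1}|<\epsilon m$, which is exactly the regime right before a projection. The argument above sidesteps this in two ways: (i) in the first conclusion, using that the \emph{exact} projection has weight exactly $\epsilon m$ makes all weight-dependent terms collapse into the single quantity $\epsilon m\,g(|M^{t+1}|/(\epsilon m))\ge 0$, which only helps; and (ii) in the second conclusion, comparing $D^{t+1}$ with $\hat{D}^{t+1}$ only \emph{via} $D^*$, and exploiting that the Bregman projection acts by ``scale up by $c\ge1$, then clip at $1$'', which forces the likelihood ratio $D^*/\hat{D}^{t+1}$ to be $\ge1$ everywhere.
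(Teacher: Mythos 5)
Your proof of the first implication is correct, and in fact slightly more robust than the paper's: by writing $\KL(M_E\|B)=\epsilon m\big(\RE(D_E\|D_B)+g(|B|/(\epsilon m))\big)$ with $g\ge 0$ and using only $|M^*|=\epsilon m$, you avoid having to pin down the exact weight of $M^{t+1}$ (the paper's expansion tacitly sets $|M^{t+1}|=(1+\zeta)\epsilon m$, whereas the approximate projection only guarantees a weight in $[\epsilon m,(1+\zeta)\epsilon m]$). That part stands.

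The second implication, however, has a genuine error. Your case analysis for the clipped coordinates is reversed: if $cN^{t+1}(x)>1$ then $N^{t+1}(x)>1/c$, so $\min(1,cN^{t+1}(x))/N^{t+1}(x)=1/N^{t+1}(x)<c$, not $>c$. Consequently the claimed pointwise bound $D^*(x)/\hat{D}^{t+1}(x)\ge 1$ is false wherever clipping occurs: clipping takes mass away from the large coordinates, so $D^*$ is \emph{smaller} than $\hat{D}^{t+1}$ there. Concretely, take $X=\{1,2\}$, $N^{t+1}=(1,0.1)$ and target weight $\epsilon|X|=1.2$; then $c=2$, $M^*=(1,0.2)$, $D^*=(5/6,1/6)$ while $\hat{D}^{t+1}=(10/11,1/11)$, and $D^*(1)=5/6<10/11=\hat{D}^{t+1}(1)$. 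This is not just a fixable sign slip: the inequality $\RE(D_E\|D^*)\le\RE(D_E\|\hat{D}^{t+1})$ is simply not a pointwise-domination statement, and it genuinely needs the hypothesis $M_E\in\Gamma_\epsilon$ (equivalently $\|D_E\|_\infty\le 1/(\epsilon m)$), which your argument never uses --- in the example above, a $D_E$ concentrated on $x=1$ violates the inequality. The paper instead obtains $\RE(D_E\|\hat{D}^{t+1})\ge\RE(D_E\|D^*)+\RE(D^*\|\hat{D}^{t+1})\ge\RE(D_E\|D^*)$ by applying Bregman's theorem (Theorem~\ref{thm:bregman}) to the measures $M_E,M^*,N^{t+1}$ --- valid precisely because $M_E\in\Gamma_\epsilon$ and $M^*$ is the exact projection of $N^{t+1}$ onto $\Gamma_\epsilon$ --- and then translating back to relative entropies via Fact~\ref{identity:KLRE}, where the weight terms for $M_E$ and $M^*$ cancel since both equal $\epsilon m$. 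You should replace your pointwise argument with this Pythagorean step.
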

\endgroup

\begin{proof}
Assume
\begin{equation} \label{eq:abp_guarantee}
    \KL(M_E||M^{t+1}) \le \KL(M_E||M^*) + \zeta \epsilon m.
\end{equation}
Expanding both sides using the KL-RE identity from Fact~\ref{identity:KLRE} and the fact that $M^{t+1}$ is a $\zeta \epsilon m$-approximate measure of $M^*$, we obtain the following.
\paragraph{LHS Expansion}
\begin{align*}
    \KL(M_E||M^{t+1}) &= |M_E| \RE(D_E||D^{t+1}) + |M_E|\log\left(\frac{|M_E|}{|M^{t+1}|}\right) + |M^{t+1}|-|M_E| \\
    &= \epsilon m \RE(D_E||D^{t+1}) + \epsilon m \log\left(\frac{\epsilon m}{(1+\zeta)\epsilon m}\right) + ((1+\zeta)\epsilon m - \epsilon m) \\
    &= \epsilon m \left[ \RE(D_E||D^{t+1}) - \log(1+\zeta) + \zeta \right]
\end{align*}
\paragraph{RHS Expansion}
Since $|M_E|=|M^*|=\epsilon m$:
\begin{align*}
    \KL(M_E||M^*) + \zeta \epsilon m &= \epsilon m \RE(D_E||D^*) + \epsilon m \log(1) + 0  + \zeta \epsilon m\\
     &= \epsilon m \left[ \RE(D_E||D^*) + \zeta \right]
\end{align*}
Substituting the expansions back into  (\ref{eq:abp_guarantee}) and dividing by $\epsilon m$, we get
\[
    \RE(D_E||D^{t+1}) \le \RE(D_E||D^*) + \log(1+\zeta)
\]
Using the inequality $\log(1+\zeta) \le \zeta$, we obtain the first implication of the lemma:
\begin{equation} \label{eq:result_a}
    \RE(D_E||D^{t+1}) \le \RE(D_E||D^*) + \zeta.
\end{equation}
Next, we can apply Bregman's Theorem (Theorem~\ref{thm:bregman}) because $M^*$ is the exact projection of $N^{t+1}$ onto the convex set $\Gamma_\epsilon$, and $M_E \in \Gamma_\epsilon$:
\begin{equation} \label{eq:gpt}
    \KL(M_E||N^{t+1}) \ge \KL(M_E||M^*) + \KL(M^*||N^{t+1})
\end{equation}
We expand all three terms using the KL-RE identity:
\begin{align*}
\KL(M_E||N^{t+1}) & = |M_E| \RE(D_E||\hat{D}^{t+1}) + \underbrace{|M_E|\log\left(\frac{|M_E|}{|N^{t+1}|}\right) + |N^{t+1}|-|M_E|}_{T_M}\\
\KL(M_E||M^*) & = \epsilon m \RE(D_E||D^*)\\
\KL(M^*||N^{t+1}) & = |M^*| \RE(D^*||\hat{D}^{t+1}) + \underbrace{|M^*|\log\left(\frac{|M^*|}{|N^{t+1}|}\right) + |N^{t+1}|-|M^*|}_{T'_M}
\end{align*}
Since $|M^*|=|M_E|=\epsilon m$, the terms $T_M$ and $T'_M$ are equal.
Substituting the expansions back into the Bregman inequality (\ref{eq:gpt}) and dividing by $\epsilon m$ gives:
\[ \RE(D_E||\hat{D}^{t+1}) \ge \RE(D_E||D^*) + \RE(D^*||\hat{D}^{t+1})\ge \RE(D_E||D^*),
\]
where the last inequality used the non-negativity of relative entropy.
Combining with Equation~\eqref{eq:result_a}, we obtain the second implication of the lemma
\[ 
\RE(D_E||D^{t+1}) \le \RE(D_E||\hat{D}^{t+1}) + \zeta. 
\]
\end{proof}

\end{document}